\documentclass[letterpaper, 10pt, conference]{ieeeconf}

\IEEEoverridecommandlockouts
\usepackage[utf8]{inputenc}
\usepackage{amsmath,amssymb,amsfonts,mathtools}

\usepackage{amsthm}
\usepackage{xcolor}
\usepackage{graphicx}
\usepackage{float}
\usepackage{algorithm}
\usepackage{algorithmic}
\usepackage{booktabs}
\usepackage{caption}
\usepackage[hidelinks]{hyperref}
\usepackage{cite}

\newtheorem{assumption}{Assumption}
\newtheorem{definition}{Definition}

\newtheorem{theorem}{Theorem}
\newtheorem{corollary}{Corollary}
\newtheorem{remark}{Remark}

\newcommand{\cB}{\mathcal{B}}
\newcommand{\cX}{\mathcal{X}}
\newcommand{\cU}{\mathcal{U}}
\newcommand{\cY}{\mathcal{Y}}
\newcommand{\dW}{d_{W_1}}
\newcommand{\Vstar}{V^*}

\newcommand{\LV}{L_V}
\newcommand{\VoNI}{\mathrm{VoNI}}
\newcommand{\E}{\mathbb{E}}
\newcommand{\thetaest}{\hat{\theta}}
\newcommand{\thetatrue}{\theta_{\mathrm{true}}}

\title{Active Noise Floor Estimation for Reliability-Optimal POMDPs:\\
A Value-of-Noise-Information Approach}

\author{
Hyung-Jin Yoon
\thanks{H.-J.~Yoon is with the
Department of Mechanical and Nuclear Engineering, Tennessee Technological
University, Cookeville, TN 37135, USA.
\texttt{hjyoon@tntech.edu}}
}

\begin{document}
\maketitle

\begin{abstract}
Finite Reliability Representations (FRR) certify when a cell-constant policy is sufficient for reliable decision-making in a partially observed system with a known physical noise floor. In practice, however, the effective noise floor is often latent and context-dependent: sensing uncertainty may vary with illumination, reflectivity, or occlusion, while execution uncertainty may vary with terrain, actuator degradation, or wheel--ground interaction. This paper develops a certificate-aware active disambiguation framework for an unknown physical noise parameter $\theta=(\sigma_y,\sigma_u)$, with the sensor-only case recovered by fixing $\sigma_u$. Rather than treating noise estimation as an end in itself, the proposed framework decides when reducing posterior uncertainty in $\theta$ is valuable for preserving the FRR policy certificate. We define the \emph{Value of Noise Information} (VoNI) as the expected excess FRR certificate gap incurred when the reliability cover is calibrated to the current estimate rather than to the realized physical noise parameter. We bound VoNI using two certificate-relevant mismatch terms: action-value model mismatch and FRR radius inflation. The resulting bound shows that actively refining the noise estimate has low decision value in sub-crossover regimes where the FRR certificate is insensitive to $\theta$, but becomes valuable when posterior uncertainty can invalidate the current reliability cover. A bi-level decision maker takes a posterior over $\theta$ obtained from innovation statistics, execution residuals, or a separate online noise estimator, and triggers diagnostic probing only when uncertainty in $\theta$ threatens the FRR certificate. We also interpret VoNI as a tractable, certificate-aware approximation to a high-level finite POMDP for disambiguating latent sensing--execution regimes. Under well-specified, stationary, identifiable, and persistently exciting noise regimes, we establish posterior consistency of the noise estimate and convergence of the induced policy loss to the FRR approximation floor. Numerical results on a closed-loop unicycle-model UGV with EKF-based innovation residuals instantiate the sensor-only case and show that VoNI detects abrupt sensing-noise jumps earlier than posterior-entropy probing, tracks gradual sensing-noise drift with lower mean absolute error, and uses substantially fewer diagnostic probing actions across 50 Monte Carlo trials.
\end{abstract}

\section{Introduction}
\label{sec:intro}

Partially observed autonomous systems must plan under two distinct layers of
uncertainty: uncertainty about the current state, captured by the belief
$b \in \cB(\cX)$, and uncertainty about the physical channels through which
state information is acquired and commands are executed.
The theory of Finite Reliability Representations (FRR) \cite{frr2026}
addresses the first layer for a known physical noise floor: the deployed
sensing and execution model induces a finite cover of the reachable belief
space by reliability cells, and a cell-constant policy achieves
suboptimality bounded by $2\varepsilon/(1-\gamma)$ when every cell has
optimal action-value variation at most $\varepsilon$.
This result connects hardware sensing and actuation capability to the
minimum policy resolution needed for reliability-optimal control.

FRR, however, treats the physical noise-floor parameter as known.
This assumption is violated in practice.
The effective noise floor is a function of context: illumination, surface
reflectivity, occlusion, relative motion, terrain interaction, wheel slip, and
actuator degradation can all change the sensing and execution model in ways
that cannot be fully characterized at design time.
A stereo camera may provide accurate depth estimates on textured open surfaces
but become unreliable near reflective, transparent, or low-texture regions.
Similarly, the action actually executed by a mobile platform may differ from
the commanded action when traction, saturation, or actuator bandwidth changes.
A planner that continues to use a reliability cover calibrated at the nominal
low-noise condition may therefore operate with a cover that is under-resolved
relative to the degraded operating condition, so the original FRR certificate
no longer applies.

This paper addresses the second layer of uncertainty: uncertainty in the
physical noise floor itself. We represent the latent physical noise parameter
as
\[
  \theta := (\sigma_y,\sigma_u),
\]
where $\sigma_y$ denotes sensing noise and $\sigma_u$ denotes
action-execution or actuator noise. The sensor-only case is recovered by
fixing $\sigma_u$ and estimating only $\sigma_y$.

The question considered here is not simply how to estimate $\theta$.
Noise and covariance estimation can be performed using classical innovation
or residual-based methods \cite{mehra1970}. In recent companion work, an
online disturbance-covariance estimator was developed for Model Predictive
Path Integral (MPPI) control \cite{adaptive_mppi_noise2026}.
That companion paper treats the unknown process covariance as a statistical
estimand: it is updated cell-wise from closed-loop residuals, smoothed
spatially across neighboring cells, and propagated into an MPPI stability
certificate through an explicit adaptation penalty. Its focus is the
estimation-to-stability pipeline: given that covariance estimation is being
performed, how does the resulting estimation error affect a closed-loop MPPI
certificate?

The present paper addresses a complementary decision-theoretic question:
\emph{when does reducing uncertainty in the physical noise parameter improve
the certified reliability representation enough to justify additional
information-gathering cost?}
It does not require a particular covariance estimator or filtering method.
Instead, it assumes that a posterior or credible set over the physical noise
parameter is available. In closed-loop deployment, reducing posterior
uncertainty about $\theta$ may require diagnostic probing actions that perturb
the task trajectory, consume time or energy, or increase exposure to risk.
Thus the relevant quantity is not the absolute value of model knowledge, but
the certificate-relevant value of reducing the current posterior uncertainty.
The FRR certificate provides this criterion: information about $\theta$ is
valuable when posterior uncertainty can change the certified action-value
sensitivity or invalidate the current reliability cover. Conversely, if the
posterior credible region remains in a sub-crossover regime where the FRR
certificate is nearly insensitive to $\theta$, active probing for noise
estimation has low certificate value. In this sense, the noise estimator
supplies uncertainty over $\theta$, while VoNI decides whether that
uncertainty warrants diagnostic probing or conservative replanning.

We formalize this idea through the Value of Noise Information (VoNI), defined
as the expected excess FRR certificate gap caused by using a reliability cover
calibrated to the current estimate rather than to the realized physical noise
parameter. Importantly, VoNI is evaluated with respect to the current
posterior, not by assuming access to the realized parameter. At runtime, it is computed with respect to the current posterior
$q_t(\theta)$, which may be supplied by innovation statistics, execution
residuals, a covariance estimator, or another platform-specific noise
estimation module. VoNI therefore acts as a certificate-aware decision layer on
top of a noise estimator: the estimator supplies uncertainty over $\theta$,
while VoNI decides whether reducing that uncertainty matters for preserving
the FRR policy certificate.

VoNI drives a bi-level decision maker. The outer level maintains or receives a
posterior over $\theta$ from observation innovations and, when available,
execution residuals. The inner level executes the FRR cell-constant policy
under the current noise estimate. Diagnostic probing is triggered only when posterior uncertainty in
$\theta$ threatens the FRR certificate. The high-level decision compares the
expected reduction in the excess certificate gap with the task-level cost of
probing, rather than relying on raw posterior entropy alone.

\textbf{Contributions.}
This paper makes six contributions.
(i) We formulate active physical-noise disambiguation as a certificate-aware
decision problem rather than as generic covariance estimation.
(ii) We define the Value of Noise Information (VoNI), the expected excess FRR
certificate gap caused by using a reliability cover calibrated to an incorrect
physical noise parameter.
(iii) We extend the formulation from scalar sensing noise to a combined
sensing--execution parameter $\theta=(\sigma_y,\sigma_u)$, so that both
sensor degradation and actuator or execution uncertainty can affect the
reliability cover.
(iv) We derive an analytical VoNI bound in terms of action-value model mismatch
and FRR radius inflation, showing when posterior uncertainty in $\theta$ can
invalidate the current certificate.
(v) We interpret VoNI as a tractable, certificate-aware approximation to a
high-level finite POMDP for disambiguating latent operating regimes.
(vi) We provide a bi-level decision maker and numerical evidence on a
closed-loop unicycle-model UGV with EKF-based innovation residuals, showing
earlier detection of abrupt sensing-noise changes, improved drift tracking,
and fewer unnecessary probing actions than posterior-entropy exploration.

\textbf{Positioning.}
The proposed method sits at the intersection of POMDP approximation,
Bayesian model uncertainty, adaptive noise estimation, active perception,
hierarchical model disambiguation, and FRR. Section~\ref{sec:related_work}
positions the contribution relative to these literatures and clarifies why
VoNI is a certificate-aware value-of-information criterion rather than a
standard covariance estimator, entropy bonus, or generic POMDP solver.

\section{Related Work and Positioning}
\label{sec:related_work}

\subsection{POMDP planning and belief-space approximation}

Partially observable Markov decision processes provide the standard
formalism for sequential decision-making under hidden state uncertainty
\cite{kaelbling1998,spaan2012}. Because exact POMDP planning is generally
intractable in large or continuous domains, a large body of work develops
belief-space approximation methods, including value-function approximation,
point-based value iteration, reachability-guided belief selection, particle
representations, and belief compression
\cite{hauskrecht2000,pineau2003,kurniawati2008,roy2002}.
These methods address how to approximate planning in a known POMDP model.
The present paper addresses a different question: when the physical
noise-floor parameter that defines the POMDP model is itself uncertain, when
does disambiguating that parameter improve the reliability certificate enough
to justify information-gathering actions? VoNI is therefore not a generic
POMDP approximation method; it is an adaptive layer that decides when the
current FRR cover may be miscalibrated.

\subsection{Bayes-adaptive POMDPs and model uncertainty}

Bayes-adaptive MDP and POMDP formulations augment the state with a posterior
over unknown model parameters, thereby converting model uncertainty into a
larger belief-state planning problem \cite{ross2007bayesadaptive,doshi2010}.
Such methods provide a principled Bayesian treatment of unknown transition or
observation models, but exact planning in the augmented space is typically
expensive, and practical implementations often rely on approximations or
exploration bonuses that are not tied to a representation certificate.
In contrast, this paper does not attempt to solve the full Bayes-adaptive
POMDP over all model parameters. Instead, it focuses on the physical
noise-floor parameter
\[
  \theta=(\sigma_y,\sigma_u),
\]
and asks whether posterior uncertainty in $\theta$ can invalidate the
current FRR certificate. Thus, exploration is driven by a certificate gap,
not by posterior entropy or model uncertainty alone.

\subsection{Adaptive noise and covariance estimation}

Adaptive filtering methods estimate sensor or process-noise covariance from
innovation sequences and residual statistics, with classical work dating back
to covariance identification and adaptive Kalman filtering \cite{mehra1970}.
These methods primarily answer a statistical estimation question: how should
a noise parameter or covariance be inferred from data?

Recent companion work studied this question in the context of Model
Predictive Path Integral (MPPI) control \cite{adaptive_mppi_noise2026}. There, the unknown process-disturbance covariance is
treated as a statistical estimand, updated cell-wise from closed-loop
residuals, smoothed spatially across neighboring cells, and propagated into a
closed-loop MPPI stability certificate through an explicit adaptation penalty.
That paper therefore develops an estimation-to-stability pipeline: given that
online covariance estimation is being performed, it quantifies how estimation
error tightens or loosens an MPPI stability bound.

The present paper addresses a complementary decision problem. It does not
require a particular covariance estimator, innovation filter, or residual
model. Instead, it assumes that a posterior or credible set over the physical
noise parameter is available, and asks whether reducing that uncertainty is
worth the task-level cost. In this sense, adaptive covariance estimation
supplies uncertainty over $\theta$, while VoNI decides when that uncertainty
is large enough, in an FRR-certificate sense, to justify diagnostic probing or
conservative replanning. Thus VoNI separates the statistical question of
estimating a noise parameter from the decision-theoretic question of whether
that estimate matters for the reliability representation.

\subsection{Active perception and information-gain exploration}

Active perception, sensor management, and information-theoretic exploration
select actions that improve future observations or reduce uncertainty
\cite{lindley1956,mackay1992,krause2008,bourgault2002,charrow2015}.
These methods typically optimize entropy reduction, mutual information, or
expected information gain. Such criteria are valuable when the objective is
to learn the state, map, or model as accurately as possible. However,
information gain alone does not distinguish uncertainty that affects the
control certificate from uncertainty that is irrelevant for action selection.
VoNI uses the FRR decision-diameter criterion to make this distinction: it
values information about $\theta$ only when that information can reduce the
excess loss caused by using a reliability cover calibrated to a mismatched
physical noise regime.

Entropy-based exploration also appears in reinforcement learning, most notably
in maximum-entropy methods such as Soft Actor-Critic
\cite{haarnoja2018sac}. In those methods, entropy is applied to the action
policy to encourage stochastic exploration. The entropy baseline used in this
paper is different: it is a posterior-entropy trigger over the latent physical
noise parameter $\theta$. This baseline probes when uncertainty in $\theta$ is
large, whereas VoNI probes only when that uncertainty is expected to affect
the FRR certificate.

\subsection{Hierarchical model disambiguation}

Many robotic and machine-learning systems use hierarchical decision
architectures in which a high-level module selects modes, models, safety
policies, information-gathering behaviors, or reconfiguration actions, while
a low-level controller executes task actions \cite{sutton1999options}.
In the present paper, this structure is formalized as a finite high-level
disambiguation POMDP whose hidden state is the latent physical regime
$\theta$, whose observations are sensor-innovation and execution-residual
signatures, and whose actions include exploitation, sensing probes, actuator
probes, and conservative replanning. Solving this meta-POMDP exactly would
give the optimal disambiguation strategy, but it is generally unnecessary and
expensive. VoNI provides a tractable myopic approximation: it triggers
disambiguating actions when the posterior over $\theta$ creates a large
expected FRR certificate gap.

\subsection{Finite Reliability Representations}

The closest predecessor is the FRR framework \cite{frr2026}, which connects
belief-space resolution to physical noise floors and decision-relevant
action-value variation. FRR provides a policy-sufficiency theorem: if every
cell in a reliability cover has decision diameter at most $\varepsilon$, then
a cell-constant policy is within $2\varepsilon/(1-\gamma)$ of the optimal
policy for the same noisy POMDP. The present paper keeps this certificate but
removes the assumption that the physical noise parameter is known. The main
contribution is therefore complementary to FRR: FRR certifies a cover for a
given $\theta$, while VoNI decides when uncertainty in $\theta$ makes that
cover unreliable enough to justify active disambiguation.

\section{Problem Setup}
\label{sec:setup}

\subsection{System Model}

We adopt the FRR planning model of \cite{frr2026}, but treat the physical
noise floor as latent. In the general formulation, the physical noise
parameter is
\[
  \theta_t := (\sigma_{y,t},\sigma_{u,t}),
\]
where $\sigma_{y,t}$ denotes the sensing noise floor and $\sigma_{u,t}$
denotes the action-execution or actuator noise floor. The commanded control is
$u_t$, while the physically executed control is $\tilde u_t$, sampled from an
execution channel
\begin{equation}
  \tilde u_t \sim G_{\sigma_{u,t}}(\cdot\mid u_t).
  \label{eq:execution_channel}
\end{equation}
The system evolves as
\begin{align}
x_{t+1} &= f(x_t,\tilde u_t,w_t), \label{eq:dyn}\\
y_t     &= h(x_t)+v_t,\qquad
v_t\sim\mathcal{N}(0,\sigma_{y,t}^2 I_p).
\label{eq:obs}
\end{align}
Dynamics are Lipschitz, the reward is bounded Lipschitz
($\|r\|_\infty\!\leq r_{\max}$, with Lipschitz constant $L_r$), and the
observation map $h$ is Lipschitz. The scalar sensor-only model used in the
numerical section is recovered by fixing the execution channel
$G_{\sigma_u}$ and maintaining uncertainty only over $\sigma_{y,t}$.

The physical noise parameter may be constant, piecewise constant, or slowly
varying:
\begin{equation}
  \theta_{t+1}
  =
  \theta_t+\xi_t,\qquad
  \E[\|\xi_t\|^2]\le \delta^2,\qquad
  \theta_t\in\Theta,
\label{eq:theta_dyn}
\end{equation}
where $\Theta\subset\mathbb{R}^{d_\theta}$ is compact. In the sensor-only
case, $\theta_t=\sigma_{y,t}$. We consider two regimes: (a) \emph{step
changes}, in which $\theta_t$ is piecewise constant and changes when the
operating context changes; and (b) \emph{gradual drift}, in which
$\theta_t$ varies slowly as sensing or execution conditions change.

\begin{assumption}[Slow variation and diagnostic excitation]
\label{ass:slow}
The physical noise parameter is constant, piecewise constant with sufficiently
long dwell times, or slowly varying with $\delta\ll 1$. Moreover, on intervals
$[t_k,t_k+T]$ over which $\theta_t$ is constant or approximately constant, the
available diagnostic data are informative about all identifiable directions of
$\theta$. Specifically, let $z_\tau$ denote the noise-relevant diagnostic
signal available at time $\tau$---for example an observation innovation, an
execution residual, or both---and let $\mathcal{I}_\tau$ denote the
information used by the noise-estimation module. Define the score
\[
  s_\tau(\theta)
  :=
  \nabla_\theta
  \log p(z_\tau\mid \mathcal{I}_\tau,\theta).
\]
There exists $\delta_F>0$ such that, on each such interval,
\begin{equation}
  \lambda_{\min}
  \left(
  \sum_{\tau=t_k}^{t_k+T}
  s_\tau(\theta)s_\tau(\theta)^\top
  \right)
  \geq
  \delta_F T .
  \label{eq:diagnostic_excitation}
\end{equation}
In the sensor-only specialization, $\theta=\sigma_y$ and
\eqref{eq:diagnostic_excitation} reduces to a scalar lower bound on the
accumulated squared score of the observation-innovation likelihood.
\end{assumption}

\begin{remark}[Role of the score and the logarithm]
\label{rem:score_information}
The vector $s_\tau(\theta)$ is the score of the diagnostic likelihood. It
measures the local sensitivity of the diagnostic data distribution to changes
in the physical noise parameter. The logarithm appears because likelihoods
over time multiply, while log-likelihoods add; accumulated information over an
interval is therefore naturally represented by a sum of score outer products.
Condition~\eqref{eq:diagnostic_excitation} is a persistent-excitation
condition for the noise parameter. If the accumulated score is small in some
direction, then nearby values of $\theta$ along that direction induce nearly
indistinguishable innovation or residual statistics, and posterior
concentration cannot be expected regardless of the estimator used. The
minimum-eigenvalue form is stronger than requiring only
$\sum_\tau \|s_\tau(\theta)\|^2$ to be large: it prevents the data from being
informative in one parameter direction while leaving another direction
unidentified. This distinction is important for the combined
sensing--execution model, where the available data must distinguish changes in
$\sigma_y$ from changes in $\sigma_u$.
\end{remark}

\subsection{Joint Belief and FRR Background}
\label{subsec:frr_bg}

The ideal Bayesian information state is the joint belief
\[
  B_t(x,\theta)=p(x_t,\theta_t\mid y_{0:t},u_{0:t-1}).
\]
The proposed certificate-aware architecture does not solve the full
Bayes-adaptive POMDP over $(x,\theta)$. Instead, under
Assumption~\ref{ass:slow}, it uses the approximation
\[
  B_t(x,\theta)\approx b_t^{\hat\theta_t}(x)\,q_t(\theta),
\]
where $q_t$ is the marginal posterior or credible distribution over the
physical noise parameter, $\hat\theta_t$ is its current point estimate, and
$b_t^{\hat\theta_t}$ is the task-state belief propagated using the planning
model calibrated at $\hat\theta_t$.

\begin{assumption}[Posterior interface and calibration]
\label{ass:posterior_interface}
At each time $t$, the decision maker has access to a posterior distribution
or credible set $q_t(\theta)$ over the compact parameter set $\Theta$,
generated from diagnostic data
\[
  z_t \in \{\text{observation innovations, execution residuals, or both}\}.
\]
The diagnostic likelihood used to update $q_t$ is well specified on the
operating regime under consideration, assigns positive prior mass to a
neighborhood of the true physical noise parameter, and is locally identifiable
on intervals satisfying Assumption~\ref{ass:slow}. In the stationary case,
these conditions imply posterior concentration around the true parameter; in
the slowly varying case, they imply posterior tracking up to the drift and
excitation limits of Assumption~\ref{ass:slow}.
\end{assumption}

The posterior $q_t$ may be produced by innovation statistics, execution
residuals, an online covariance estimator, or any estimator compatible with
Assumption~\ref{ass:posterior_interface}. VoNI is evaluated with respect to
this posterior and therefore does not assume access to the true physical noise
parameter during online operation.

\begin{remark}[Scope of the posterior assumption]
VoNI is a certificate-aware decision layer, not a standalone noise-estimation
algorithm. Assumption~\ref{ass:posterior_interface} may be satisfied by
innovation-based adaptive filtering, execution-residual statistics, an online
covariance estimator, or a task-specific diagnostic model. If the diagnostic
model is misspecified, if sensing noise and execution noise are not
identifiable from the available data, or if the operating trajectory provides
insufficient excitation, then the posterior may concentrate away from the true
physical noise parameter. In that case, VoNI may underestimate the FRR
certificate risk. The posterior-calibration assumption is therefore a
substantive condition for the reliability interpretation of VoNI.
\end{remark}

For a fixed physical noise parameter $\theta=(\sigma_y,\sigma_u)$, let
$\Phi_\theta(b,\tilde u,y)$ denote the Bayesian filter update after the
executed action $\tilde u$ and observation $y$, and let
$O_\theta(\cdot\mid b,\tilde u)$ denote the predictive observation law. The
controlled belief-transition kernel induced by a commanded action $u$ is
\begin{equation}
\begin{aligned}
K_\theta(A\mid b,u)
:=
\int_{\tilde{\cU}}\int_{\cY}
&{\bf 1}_A\!\left(\Phi_\theta(b,\tilde u,y)\right)
O_\theta(dy\mid b,\tilde u) \\
&\times G_{\sigma_u}(d\tilde u\mid u),
\end{aligned}
\label{eq:belief_kernel}
\end{equation}
for measurable $A\subseteq\cB(\cX)$. This kernel is the object that appears
in the Bellman equation: it describes the distribution of the next belief
before both the executed action and the next observation are known.

A key point in \cite{frr2026} is that FRR does not require the fixed
observation update $b\mapsto \Phi_\theta(b,\tilde u,y)$ to be globally
contractive. Instead, FRR uses a reachable-set Lipschitz modulus for the
belief-transition kernel. Specifically, on the reachable belief set
$\cB_{\rm reach}$, assume that there exists a finite constant $\beta_\theta$
such that
\begin{equation}
W^{\cB}_1\!\left(
K_\theta(\cdot\mid b,u),
K_\theta(\cdot\mid b',u)
\right)
\le
\beta_\theta\,\dW(b,b')
\label{eq:kernel_lip}
\end{equation}
for all $b,b'\in\cB_{\rm reach}$ and all $u\in\cU$, where $W^{\cB}_1$ is the
Wasserstein-1 distance on probability measures over belief space, using
$\dW$ as the ground metric. If
\begin{equation}
\gamma\beta_\theta < 1,
\label{eq:discounted_kernel_condition}
\end{equation}
then the optimal action-value function is Lipschitz on $\cB_{\rm reach}$:
\begin{equation}
\begin{aligned}
\sup_{u\in\cU}
\left|
Q^*_\theta(b,u)-Q^*_\theta(b',u)
\right|
&\le
\LV(\theta)\,\dW(b,b'), \\
\LV(\theta)&:=
\frac{L_r}{1-\gamma\beta_\theta}.
\end{aligned}
\label{eq:LQ_theta}
\end{equation}
Here $\LV(\theta)$ should be read as a certified action-value sensitivity of
the noisy planning model, not as a universal monotone function of sensing
noise or execution noise alone.

\begin{assumption}[Uniform FRR regularity over plausible noise parameters]
\label{ass:uniform_frr}
For every $\theta$ in the posterior support considered by the algorithm, the
fixed-parameter POMDP induced by $K_\theta$ satisfies the FRR regularity
conditions on $\cB_{\rm reach}$. In particular, there exists
$\bar\beta<1/\gamma$ such that
\[
  \beta_\theta \le \bar\beta
  \qquad \forall \theta \in \Theta_{\rm cert},
\]
where $\Theta_{\rm cert}\subseteq\Theta$ contains the posterior credible sets
used for VoNI evaluation. Consequently,
\[
  \LV(\theta)=\frac{L_r}{1-\gamma\beta_\theta}
\]
is finite and uniformly bounded on $\Theta_{\rm cert}$. Moreover,
$\LV(\theta)$ and the fixed-parameter action-value functions $Q^*_\theta$ are
continuous in $\theta$ on $\Theta_{\rm cert}$, or admit certified upper bounds
sufficient for evaluating the VoNI bound.
\end{assumption}

An FRR cover at level $(\theta,\varepsilon)$ is a finite collection of
reliability cells $\{C_i,\hat b_i\}_{i=1}^N$ covering $\cB_{\rm reach}$ such
that
\begin{equation}
\Delta_Q^\theta(C_i)
:=
\sup_{b,b'\in C_i}
\sup_{u\in\cU}
\left|
Q^*_\theta(b,u)-Q^*_\theta(b',u)
\right|
\le \varepsilon .
\label{eq:decision_diameter_theta}
\end{equation}
A sufficient construction is to use cells satisfying
\begin{equation}
\operatorname{diam}_{\dW}(C_i)
\le
\frac{\varepsilon}{\LV(\theta)} .
\label{eq:frr_radius_theta}
\end{equation}
The corresponding cell-constant policy $\bar\pi_{\theta,\varepsilon}$ chooses
an action using the representative belief of the cell containing $b$. The FRR
policy-sufficiency theorem gives
\begin{equation}
\left\|
V^*_\theta
-
V^{\bar\pi_{\theta,\varepsilon}}_\theta
\right\|_\infty
\le
\frac{2\varepsilon}{1-\gamma}.
\label{eq:frr_policy_bound}
\end{equation}

The information crossover scale used below denotes a model-dependent region
of the noise-parameter space in which variations in $\theta$ induce
certificate-relevant changes in the FRR constants. It is not assumed to be a
universal physical constant. Rather, it is obtained from offline calibration,
simulation, or certified bounds on the fixed-parameter planning models
$\{K_\theta,Q^*_\theta,\LV(\theta)\}_{\theta\in\Theta_{\rm cert}}$. In the
sensor-only UGV calibration used in the numerical section, the crossover scale
is denoted $\sigma_0=0.3\,\mathrm{m}$. The certified sensitivities satisfy
$\LV(0.05)\approx 1.03$ under the nominal sensing condition and
$\LV(0.80)\approx 7.97$ under the degraded sensing condition. Since the
sufficient FRR cell radius scales as
\[
  r_{\rm FRR}(\theta)=\frac{\varepsilon}{\LV(\theta)},
\]
a cover calibrated at $\sigma_y=0.05\,\mathrm{m}$ can use cells roughly
$7.97/1.03\approx 7.7$ times larger than the cells certified at
$\sigma_y=0.80\,\mathrm{m}$. Thus an undetected transition from the nominal
to the degraded sensing regime can invalidate the nominal reliability cover,
even if the task-state belief update itself remains well defined. Conversely,
if the posterior support lies entirely in a sub-crossover region where
$\LV(\theta)$ is nearly constant, then reducing uncertainty in $\theta$ has
little effect on the certified FRR radius and has low VoNI.

\begin{remark}[Crossover scale and the role of the assumptions]
\label{rem:crossover_assumptions}
The crossover scale illustrates why the assumptions above are needed. 
Assumption~\ref{ass:uniform_frr} requires that all plausible fixed-parameter
models in $\Theta_{\rm cert}$ admit finite FRR certificates, so that
$\LV(\theta)$ and the corresponding reliability radii can be compared across
posterior samples. Assumption~\ref{ass:posterior_interface} requires that the
posterior over $\theta$ be calibrated enough for these comparisons to be
meaningful. Assumption~\ref{ass:slow} requires that the diagnostic data contain
sufficient information to distinguish certificate-relevant regimes before the
noise parameter changes too rapidly. These conditions are not merely technical:
if $\LV(\theta)$ is not certified over the posterior support, then VoNI cannot
bound the certificate loss; if the posterior is miscalibrated, then VoNI may
assign low value to a truly degraded regime; and if the diagnostic signal
cannot distinguish sensing noise from execution noise, then the combined
parameter $\theta=(\sigma_y,\sigma_u)$ is not identifiable. In the full
sensing--execution model, the same crossover logic applies to either component
of $\theta$ through its effect on the belief kernel $K_\theta$, the
fixed-parameter value functions $Q^*_\theta$, and the certified sensitivity
$\LV(\theta)$.
\end{remark}

\section{Value of Noise Information}
\label{sec:voni}

\subsection{Definition}

For a fixed physical noise parameter $\theta$, the FRR
policy-sufficiency theorem gives
\begin{equation}
V^*_\theta(b)
-
V^{\bar\pi_{\theta,\varepsilon}}_\theta(b)
\le
\frac{2\varepsilon}{1-\gamma},
\qquad
b\in\cB_{\rm reach}.
\label{eq:frr_pointwise_bound}
\end{equation}
Thus $2\varepsilon/(1-\gamma)$ is the certified approximation floor incurred
by using a cell-constant FRR policy when the reliability cover is calibrated
to the same physical noise parameter that defines the planning model.

The difficulty addressed here is that the planner may instead use a cover
calibrated to the current estimate $\thetaest$, while the candidate physical
noise parameter governing the model is $\theta\neq\thetaest$. In that case,
the policy $\bar\pi_{\thetaest,\varepsilon}$ is not the policy certified by
\eqref{eq:frr_pointwise_bound} for the model $\theta$, and the nominal FRR
certificate may no longer apply. We therefore measure only the loss that
exceeds the FRR approximation floor:
\begin{equation}
\mathcal G_\varepsilon(b,\thetaest;\theta)
:=
\left[
V^*_\theta(b)
-
V^{\bar{\pi}_{\thetaest,\varepsilon}}_\theta(b)
-
\frac{2\varepsilon}{1-\gamma}
\right]_+ .
\label{eq:excess_certificate_gap}
\end{equation}
Here $[z]_+:=\max\{z,0\}$ denotes the positive part. Thus
$\mathcal G_\varepsilon$ is zero whenever the policy calibrated at
$\thetaest$ remains within the nominal FRR tolerance for the model $\theta$,
and is positive only when noise-parameter mismatch creates an additional
certificate gap.

\begin{definition}[Value of Noise Information]
\label{def:voni}
Fix a decision tolerance $\varepsilon>0$ and a posterior distribution $q$ over
the physical noise parameter. For belief $b\in\cB_{\rm reach}$ and current
estimate $\thetaest$, define
\begin{equation}
\VoNI_\varepsilon(b,\thetaest;q)
:=
\E_{\theta\sim q}
\left[
V^*_\theta(b)
-
V^{\bar{\pi}_{\thetaest,\varepsilon}}_\theta(b)
-
\frac{2\varepsilon}{1-\gamma}
\right]_+ ,
\label{eq:voni}
\end{equation}
equivalently,
\[
  \VoNI_\varepsilon(b,\thetaest;q)
  =
  \E_{\theta\sim q}
  \left[
  \mathcal G_\varepsilon(b,\thetaest;\theta)
  \right].
\]
Here $\bar{\pi}_{\thetaest,\varepsilon}$ is the cell-constant FRR policy
produced from a cover calibrated at $\thetaest$ and tolerance $\varepsilon$.
\end{definition}

VoNI is therefore not the ordinary FRR approximation error. That baseline
error is already accounted for by \eqref{eq:frr_pointwise_bound}. VoNI is the
posterior expected excess loss beyond the nominal FRR certificate caused by
using a reliability cover calibrated to a possibly incorrect physical noise
parameter. Consequently, if the posterior concentrates on the current
estimate, $q=\delta_{\thetaest}$, then
$\VoNI_\varepsilon(b,\thetaest;q)=0$ by the FRR policy-sufficiency theorem.

\subsection{Analytical Bound}

To bound VoNI, we introduce two mismatch quantities. First, define the uniform
action-value model mismatch
\begin{equation}
\Delta_Q(\theta,\thetaest)
:=
\sup_{b\in\cB_{\rm reach}}
\sup_{u\in\cU}
\left|
Q^*_\theta(b,u)-Q^*_{\thetaest}(b,u)
\right|.
\label{eq:Delta_Q}
\end{equation}
This term measures the change in the fixed-parameter optimal action-value
function when the planning model is changed from $\thetaest$ to $\theta$.

Second, define the FRR radius inflation factor
\begin{equation}
\rho(\theta,\thetaest)
:=
\frac{\LV(\theta)}{\LV(\thetaest)}.
\label{eq:rho_theta}
\end{equation}
If $\rho(\theta,\thetaest)>1$, then a cover calibrated at radius
$\varepsilon/\LV(\thetaest)$ is under-resolved relative to the sufficient FRR
radius $\varepsilon/\LV(\theta)$ for the model $\theta$. If
$\rho(\theta,\thetaest)\le 1$, then the $\thetaest$-calibrated radius is at
least as fine as the sufficient radius associated with $\theta$, so there is
no radius-induced loss of resolution.

\begin{theorem}[Posterior VoNI bound]
\label{thm:voni_bound}
Fix a belief $b\in\cB_{\rm reach}$, an estimate $\thetaest$, and a posterior
$q$ over $\Theta_{\rm cert}$. Suppose the FRR Lipschitz certificate
\eqref{eq:LQ_theta} holds for $\thetaest$ and for $q$-almost every
$\theta\sim q$. Suppose also that the $\thetaest$-calibrated FRR cover is
constructed with cell diameter at most $\varepsilon/\LV(\thetaest)$.
Then
\begin{equation}
\VoNI_\varepsilon(b,\thetaest;q)
\le
\frac{2}{1-\gamma}
\E_{\theta\sim q}
\left[
\Delta_Q(\theta,\thetaest)
+
\varepsilon\bigl(\rho(\theta,\thetaest)-1\bigr)
\right]_+ .
\label{eq:voni_bound}
\end{equation}
\end{theorem}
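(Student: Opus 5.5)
The plan is to bound the integrand $\mathcal G_\varepsilon(b,\thetaest;\theta)$ pointwise for each fixed $\theta$ in the support of $q$, and then take the expectation over $\theta\sim q$. Monotonicity of the expectation and of $[\cdot]_+$ then gives \eqref{eq:voni_bound}. Write $\pi:=\bar\pi_{\thetaest,\varepsilon}$. The core step is a performance-difference (simulation-lemma) argument for the single policy $\pi$ evaluated in the model $K_\theta$:
\[
V^*_\theta(b)-V^{\pi}_\theta(b)\le \frac{1}{1-\gamma}\sup_{b'}\bigl[\max_u Q^*_\theta(b',u)-Q^*_\theta(b',\pi(b'))\bigr],
\]
which follows by telescoping the Bellman equation under $K_\theta$ along the belief trajectory generated by $\pi$. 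The problem is thus reduced to bounding the per-step greedy gap $\max_u Q^*_\theta(b',u)-Q^*_\theta(b',\pi(b'))$ by $2[\Delta_Q(\theta,\thetaest)+\varepsilon\rho(\theta,\thetaest)]$.

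To bound the per-step gap, fix $b'$ in a cell $C_i$ with representative $\hat b_i$. The policy selects $\hat u_i\in\arg\max_u Q^*_{\thetaest}(\hat b_i,u)$. Let $u^\star$ be optimal for $Q^*_\theta(b',\cdot)$. The chain of comparisons runs as follows:
\begin{itemize}
\item move from $Q^*_\theta(b',\cdot)$ to $Q^*_\theta(\hat b_i,\cdot)$, which costs at most $\LV(\theta)\operatorname{diam}_{\dW}(C_i)\le \LV(\theta)\varepsilon/\LV(\thetaest)=\varepsilon\rho$ for each of the two actions, by \eqref{eq:LQ_theta} for $\theta$ and the cell-diameter hypothesis;
\item move from $Q^*_\theta(\hat b_i,\cdot)$ to $Q^*_{\thetaest}(\hat b_i,\cdot)$, which costs at most $\Delta_Q$ for each of the two actions, by \eqref{eq:Delta_Q};
\item use optimality of $\hat u_i$ for $Q^*_{\thetaest}(\hat b_i,\cdot)$ to get $Q^*_{\thetaest}(\hat b_i,u^\star)\le Q^*_{\thetaest}(\hat b_i,\hat u_i)$.
\end{itemize}
Summing these gives a per-step gap of at most $2\Delta_Q+2\varepsilon\rho$. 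Hence
\[
V^*_\theta(b)-V^\pi_\theta(b)\le \frac{2}{1-\gamma}\bigl[\Delta_Q+\varepsilon\rho\bigr].
\]
Subtracting the floor $2\varepsilon/(1-\gamma)$ and taking positive parts gives $\mathcal G_\varepsilon\le \frac{2}{1-\gamma}[\Delta_Q+\varepsilon(\rho-1)]_+$. This step uses $[a-c]_+\le[a'-c]_+$ whenever $a\le a'$.

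The main obstacle is making the performance-difference step rigorous on the belief MDP. The greedy-gap bound must hold uniformly over all beliefs reachable under $\pi$ in model $\theta$, so the plan needs those beliefs to lie in $\cB_{\rm reach}$, where the cover and the Lipschitz certificate apply. I would handle this by taking $\cB_{\rm reach}$ to be closed under $K_\theta$ for all $\theta\in\Theta_{\rm cert}$, as implicit in Assumption~\ref{ass:uniform_frr}. A second point is that the FRR theorem's constant $2\varepsilon/(1-\gamma)$ comes from the same argument with $\rho=1$ and $\Delta_Q=0$, so the bound is consistent with \eqref{eq:frr_policy_bound}. If one wanted to avoid the factor $2$ on $\varepsilon\rho$, a sharper cell argument would be needed, but the stated bound tolerates it.
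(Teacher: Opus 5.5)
Your proposal is correct and follows essentially the same route as the paper. It uses the same comparison chain: cell variation under $Q^*_\theta$ costing $\varepsilon\rho$, the swap to $Q^*_{\thetaest}$ via $\Delta_Q$, optimality of the representative's action, and the return trip. This gives a $2\Delta_Q+2\varepsilon\rho$ greedy gap, after which you apply the approximate-greedy (performance-difference) bound, subtract the floor $2\varepsilon/(1-\gamma)$ using monotonicity of $[\cdot]_+$, and take the expectation over $q$.

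Your remark that the gap bound must hold at every belief visited by $\bar\pi_{\thetaest,\varepsilon}$ under $K_\theta$, so that $\cB_{\rm reach}$ should be invariant, is a useful addition. The paper's proof relies on this uniformity only implicitly: it applies the chain at the fixed $b$ and then invokes the standard bound.
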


\begin{proof}
Fix any $\theta$ in the posterior support for which the FRR Lipschitz
certificate holds. Let $C_i$ be the $\thetaest$-calibrated FRR cell containing
$b$, with representative $\hat b_i$. By construction,
\[
  \dW(b,\hat b_i)
  \le
  \frac{\varepsilon}{\LV(\thetaest)} .
\]
Evaluating the same cell under the model $\theta$ gives
\[
\sup_{u\in\cU}
\left|
Q^*_\theta(b,u)-Q^*_\theta(\hat b_i,u)
\right|
\le
\LV(\theta)\dW(b,\hat b_i)
\le
\varepsilon \rho(\theta,\thetaest).
\]
Let $u^*_\theta(b)\in\arg\max_u Q^*_\theta(b,u)$ and let
\[
\bar u
=
\bar\pi_{\thetaest,\varepsilon}(b)
\in
\arg\max_{u\in\cU} Q^*_{\thetaest}(\hat b_i,u).
\]
Using the preceding cell-variation bound and the definition of
$\Delta_Q(\theta,\thetaest)$,
\begin{align*}
Q^*_\theta(b,\bar u)
&\ge
Q^*_\theta(\hat b_i,\bar u)
-
\varepsilon\rho(\theta,\thetaest)                                      \\
&\ge
Q^*_{\thetaest}(\hat b_i,\bar u)
-
\Delta_Q(\theta,\thetaest)
-
\varepsilon\rho(\theta,\thetaest)                                      \\
&\ge
Q^*_{\thetaest}(\hat b_i,u^*_\theta(b))
-
\Delta_Q(\theta,\thetaest)
-
\varepsilon\rho(\theta,\thetaest)                                      \\
&\ge
Q^*_\theta(\hat b_i,u^*_\theta(b))
-
2\Delta_Q(\theta,\thetaest)
-
\varepsilon\rho(\theta,\thetaest)                                      \\
&\ge
Q^*_\theta(b,u^*_\theta(b))
-
2\Delta_Q(\theta,\thetaest)
-
2\varepsilon\rho(\theta,\thetaest).
\end{align*}
Therefore $\bar\pi_{\thetaest,\varepsilon}$ is
$2\Delta_Q(\theta,\thetaest)+2\varepsilon\rho(\theta,\thetaest)$-greedy with
respect to $Q^*_\theta$. The standard approximate-greedy performance bound
then gives
\begin{equation}
V^*_\theta(b)-V^{\bar\pi_{\thetaest,\varepsilon}}_\theta(b)
\le
\frac{2}{1-\gamma}
\left[
\Delta_Q(\theta,\thetaest)+\varepsilon\rho(\theta,\thetaest)
\right].
\label{eq:mismatch_policy_bound}
\end{equation}
Subtracting the nominal FRR certificate $2\varepsilon/(1-\gamma)$ and taking
positive parts yields
\[
\begin{aligned}
&\left[
V^*_\theta(b)
-
V^{\bar\pi_{\thetaest,\varepsilon}}_\theta(b)
-
\frac{2\varepsilon}{1-\gamma}
\right]_+ \\
&\le
\frac{2}{1-\gamma}
\left[
\Delta_Q(\theta,\thetaest)
+
\varepsilon\bigl(\rho(\theta,\thetaest)-1\bigr)
\right]_+ .    
\end{aligned}
\]
Taking expectation with respect to $\theta\sim q$ gives
\eqref{eq:voni_bound}.
\end{proof}

Theorem~\ref{thm:voni_bound} separates two sources of certificate mismatch.
The term $\Delta_Q(\theta,\thetaest)$ measures model mismatch at the level of
optimal action values. The factor $\rho(\theta,\thetaest)$ measures resolution
mismatch: it compares the certified sensitivity under the candidate model
$\theta$ with the sensitivity used to construct the current cover. When
$\rho(\theta,\thetaest)>1$, the cover calibrated at $\thetaest$ may fail to
meet the sufficient FRR radius for $\theta$. When
$\rho(\theta,\thetaest)\le 1$, the current cover is not under-resolved
relative to $\theta$, and any remaining excess gap must come from
action-value model mismatch rather than radius inflation.

\subsection{Exploration Threshold}

The posterior bound in Theorem~\ref{thm:voni_bound} gives a certificate-level
interpretation of the crossover scale. Exploration is valuable not merely
because the posterior over $\theta$ is diffuse, but because the posterior
places mass on parameter values that can change either the fixed-parameter
action-value function or the FRR resolution required for certification.

The next corollary is a local specialization of
Theorem~\ref{thm:voni_bound}. It shows that when the posterior credible set
lies inside a region where both the value model and the FRR sensitivity vary
slowly with $\theta$, VoNI is controlled by the posterior radius.

\begin{corollary}[Low VoNI in a sub-crossover credible regime]
\label{cor:threshold}
Let $\Theta_-\subset\Theta_{\rm cert}$ be a sub-crossover credible region,
for example a region in which both sensing and execution uncertainty remain
near their nominal values. Suppose $q(\Theta_-)=1$ and
$\thetaest\in\Theta_-$. Assume that there exist finite local sensitivity
constants $M_-$ and $\ell_-$ such that, for all $\theta\in\Theta_-$,
\begin{align}
\Delta_Q(\theta,\thetaest)
&\le
M_-\|\theta-\thetaest\|,
\label{eq:local_delta_bound}\\
|\LV(\theta)-\LV(\thetaest)|
&\le
\ell_-\|\theta-\thetaest\|.
\label{eq:local_lip_bound}
\end{align}
Then, for every $b\in\cB_{\rm reach}$,
\begin{equation}
\VoNI_\varepsilon(b,\thetaest;q)
\le
\frac{2}{1-\gamma}
\left(
M_-+\frac{\varepsilon\ell_-}{\LV(\thetaest)}
\right)
\E_{\theta\sim q}\!\left[\|\theta-\thetaest\|\right].
\label{eq:sub_crossover_bound}
\end{equation}
\end{corollary}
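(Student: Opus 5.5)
The plan is to apply Theorem~\ref{thm:voni_bound} directly and then bound its integrand pointwise on $\Theta_-$ using the two local sensitivity conditions. Since $q(\Theta_-)=1$ and $\Theta_-\subset\Theta_{\rm cert}$, Assumption~\ref{ass:uniform_frr} gives the FRR Lipschitz certificate \eqref{eq:LQ_theta} for $\thetaest$ and for $q$-almost every $\theta$. Together with the standing hypothesis that the $\thetaest$-calibrated cover has cell diameter at most $\varepsilon/\LV(\thetaest)$, this makes the theorem applicable. It yields
\[
\VoNI_\varepsilon(b,\thetaest;q)\le\frac{2}{1-\gamma}\E_{\theta\sim q}\bigl[\Delta_Q(\theta,\thetaest)+\varepsilon(\rho(\theta,\thetaest)-1)\bigr]_+ .
\]
It therefore suffices to show that, for every $\theta\in\Theta_-$, the bracket is at most $\bigl(M_-+\varepsilon\ell_-/\LV(\thetaest)\bigr)\|\theta-\thetaest\|$.

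For the pointwise estimate, I first rewrite the radius term as
\[
\rho(\theta,\thetaest)-1=\frac{\LV(\theta)-\LV(\thetaest)}{\LV(\thetaest)}\le\frac{|\LV(\theta)-\LV(\thetaest)|}{\LV(\thetaest)}\le\frac{\ell_-\|\theta-\thetaest\|}{\LV(\thetaest)},
\]
using \eqref{eq:local_lip_bound} and $\LV(\thetaest)>0$. The latter holds because $\LV(\thetaest)=L_r/(1-\gamma\beta_{\thetaest})$ with $\gamma\beta_{\thetaest}<1$; if $L_r=0$ the claim is degenerate and needs separate mention. Adding \eqref{eq:local_delta_bound} bounds the bracket from above by $\bigl(M_-+\varepsilon\ell_-/\LV(\thetaest)\bigr)\|\theta-\thetaest\|$.

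The step needing care is the positive part. Because $[\cdot]_+$ is monotone and the right-hand side is nonnegative, $[z]_+\le c$ whenever $z\le c$ and $c\ge 0$, so the positive part causes no difficulty once the upper bound is nonnegative. Integrating over $\theta\sim q$ and pulling out the constant gives \eqref{eq:sub_crossover_bound}. I would also remark that $\E_q\|\theta-\thetaest\|$ is finite by compactness of $\Theta$, so the expectation is well defined.
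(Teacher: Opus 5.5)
Your proposal is correct and follows essentially the same route as the paper: invoke Theorem~\ref{thm:voni_bound}, rewrite $\rho(\theta,\thetaest)-1$ as $(\LV(\theta)-\LV(\thetaest))/\LV(\thetaest)$, bound the positive part of the integrand by $\Delta_Q(\theta,\thetaest)+\varepsilon|\LV(\theta)-\LV(\thetaest)|/\LV(\thetaest)$, apply the two local sensitivity bounds on $\Theta_-$, and integrate against $q$. Your extra checks (applicability of the theorem via Assumption~\ref{ass:uniform_frr}, positivity of $\LV(\thetaest)$, finiteness of the expectation) are harmless; the paper leaves them implicit.
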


\begin{proof}
Since $q(\Theta_-)=1$, it is sufficient to bound the integrand in
Theorem~\ref{thm:voni_bound} for $\theta\in\Theta_-$. For such $\theta$,
\[
\rho(\theta,\thetaest)-1
=
\frac{\LV(\theta)-\LV(\thetaest)}{\LV(\thetaest)}.
\]
Hence
\[
\begin{aligned}
&\left[
\Delta_Q(\theta,\thetaest)
+
\varepsilon\bigl(\rho(\theta,\thetaest)-1\bigr)
\right]_+ \\
&\quad\le
\Delta_Q(\theta,\thetaest)
+
\varepsilon
\frac{|\LV(\theta)-\LV(\thetaest)|}{\LV(\thetaest)} .
\end{aligned}
\]
Using \eqref{eq:local_delta_bound} and \eqref{eq:local_lip_bound}, we obtain
\[
\left[
\Delta_Q(\theta,\thetaest)
+
\varepsilon\bigl(\rho(\theta,\thetaest)-1\bigr)
\right]_+
\le
\left(
M_-+\frac{\varepsilon\ell_-}{\LV(\thetaest)}
\right)
\|\theta-\thetaest\|.
\]
Taking expectation in Theorem~\ref{thm:voni_bound} gives
\eqref{eq:sub_crossover_bound}.
\end{proof}

\begin{remark}[Constants in the sub-crossover bound]
\label{rem:sub_crossover_constants}
The constants $M_-$ and $\ell_-$ are not algorithmic tuning parameters and are
not required to compute the VoNI score. They are local certification constants
used only to obtain the simplified sub-crossover estimate
\eqref{eq:sub_crossover_bound}. In implementation, the decision maker may
compute or upper-bound VoNI directly from posterior samples using
Theorem~\ref{thm:voni_bound}, through the quantities
$\Delta_Q(\theta,\thetaest)$ and $\rho(\theta,\thetaest)$. The constants
$M_-$ and $\ell_-$ summarize these quantities only in the special case where
the posterior support lies in a locally regular, sub-crossover region.
\end{remark}

Corollary~\ref{cor:threshold} is the operational threshold statement. When
the posterior credible region remains in a sub-crossover regime and the
certified sensitivity $\LV(\theta)$ is nearly flat there, reducing posterior
uncertainty in $\theta$ produces only a small certificate improvement. In
contrast, if the posterior assigns nonnegligible mass to values where sensing
or execution degradation changes $Q^*_\theta$ or $\LV(\theta)$, the two terms
in Theorem~\ref{thm:voni_bound} can become large, and diagnostic probing or
conservative replanning can be certificate-relevant.

This distinction is important for abrupt changes. Before a noise-floor jump
is reflected in the posterior, the posterior may remain concentrated in the
nominal sub-crossover regime, and VoNI suppresses unnecessary probing. Once
innovation or execution residuals make the nominal model implausible,
posterior mass moves toward degraded-noise values; the bound
\eqref{eq:voni_bound} then increases and triggers probing only when the FRR
certificate is at risk.

Figure~\ref{fig:ugv_certificate_landscape} plots the conditional
certificate-gap proxy as a function of the sensing estimate $\hat\sigma_y$
for several sensing-noise regimes in the sensor-only UGV specialization. The
curves illustrate the analytical prediction of
Theorem~\ref{thm:voni_bound} and Corollary~\ref{cor:threshold}: local
sub-crossover uncertainty has low certificate value, whereas posterior mass
near degraded-noise regimes produces a large expected excess certificate gap.

\subsection{VoNI as a Myopic High-Level POMDP Approximation}
\label{subsec:meta_pomdp}

The preceding bound can be interpreted as a tractable approximation to a
higher-level decision problem over the latent physical noise regime. This
interpretation clarifies the role of VoNI in the overall architecture: the
low-level planner executes an FRR policy in belief space, while a high-level
decision maker determines whether the current posterior over $\theta$ is
sufficiently certificate-relevant to justify probing, conservative replanning,
or continued exploitation.

Consider a finite discretization of the plausible noise-parameter set,
\[
  \Theta=\{\theta^1,\ldots,\theta^M\}\subset\Theta_{\rm cert}.
\]
The high-level hidden state is the latent physical regime $\theta_t$, not the
task state $x_t$. The high-level belief is
\[
  q_t(i)=\Pr(\theta_t=\theta^i\mid z^H_{0:t},a^H_{0:t-1}),
\]
where $z_t^H$ denotes a residual-level diagnostic observation, such as a
quantized observation innovation, an execution residual, or a joint residual
signature. The high-level action set may include
\[
  \mathcal A^H
  =
  \{\mathrm{exploit},\mathrm{sense\mbox{-}probe},
    \mathrm{actuator\mbox{-}probe},\mathrm{safe\mbox{-}replan}\}.
\]
The action $\mathrm{exploit}$ executes the low-level FRR policy calibrated to
the current estimate $\thetaest_t$. The probing actions choose behaviors that
increase information about sensing or execution uncertainty, while
$\mathrm{safe\mbox{-}replan}$ switches to a conservative reliability cover,
for example one calibrated to a high-noise credible bound.

The corresponding high-level POMDP is specified by a regime transition model
\[
  P^H(\theta'\mid\theta,a^H),
\]
a residual observation model
\[
  O^H(z^H\mid\theta,a^H),
\]
and a reward depending on the current low-level belief $b_t$. For the exploit
action, the regime-conditioned reward is
\[
  R^H(b_t,\theta,\mathrm{exploit})
  =
  V^{\bar\pi_{\thetaest_t,\varepsilon}}_\theta(b_t),
\]
because the system acts using the FRR cover calibrated at $\thetaest_t$ while
the candidate regime is $\theta$. For a probing action $a^H$, the reward
includes the value of the probing policy and its immediate cost:
\[
  R^H(b_t,\theta,a^H)
  =
  V^{\pi^{\mathrm{probe}}_{a^H}}_\theta(b_t)-c(a^H).
\]
The high-level Bellman equation can be written compactly as
\begin{equation}
J(b,q)
=
\max_{a^H\in\mathcal A^H}
Q^H(b,q,a^H),
\label{eq:meta_pomdp_bellman}
\end{equation}
where
\begin{equation}
Q^H(b,q,a^H)
:=
\bar R^H(b,q,a^H)
+
\gamma_H\bar J^H(b,q,a^H).
\label{eq:meta_q_function}
\end{equation}
The two terms are
\begin{equation}
\bar R^H(b,q,a^H)
:=
\sum_{\theta\in\Theta}
q(\theta)R^H(b,\theta,a^H),
\label{eq:meta_expected_reward}
\end{equation}
and
\begin{equation}
\bar J^H(b,q,a^H)
:=
\sum_{z^H\in\mathcal Z^H}
P(z^H\mid q,a^H)
J\!\left(
b^+,
\tau^H(q,a^H,z^H)
\right).
\label{eq:meta_expected_continuation}
\end{equation}
Here $\tau^H$ is the Bayesian update of the high-level belief and
\begin{equation}
P(z^H\mid q,a^H)
=
\sum_{\theta\in\Theta}
q(\theta)O^H(z^H\mid\theta,a^H).
\label{eq:meta_predictive_obs}
\end{equation}

Solving \eqref{eq:meta_pomdp_bellman} exactly is generally unnecessary for
the FRR objective. The quantity needed by the low-level certificate is not the
full value of model information, but the expected excess loss caused by using
a reliability cover calibrated to the current estimate. For the exploit
action, this excess loss is precisely
\[
\E_{\theta\sim q}
\left[
V^*_\theta(b)
-
V^{\bar\pi_{\thetaest,\varepsilon}}_\theta(b)
-
\frac{2\varepsilon}{1-\gamma}
\right]_+
=
\VoNI_\varepsilon(b,\thetaest;q).
\]
Thus VoNI is the certificate-aware one-step regret of exploiting the current
cover, after subtracting the nominal FRR approximation floor.

This observation yields a myopic approximation to the high-level POMDP. A
diagnostic action is certificate-relevant when its expected reduction in VoNI
exceeds its task-level cost. Define the expected posterior VoNI after action
$a^H$ as
\begin{equation}
\begin{aligned}
&\overline{\VoNI}^{+}_{\varepsilon} \\
&:=
\E_{z^H\sim P(\cdot\mid q,a^H)}
\!\left[
\VoNI_\varepsilon
\left(
b^+,
\thetaest^+(z^H),
\tau^H(q,a^H,z^H)
\right)
\right].
\end{aligned}
\label{eq:expected_future_voni}
\end{equation}
Then the myopic VoNI reduction proxy is
\begin{equation}
\mathcal I_\varepsilon^H(b,q,a^H)
:=
\VoNI_\varepsilon(b,\thetaest;q)
-
c(a^H)
-
\overline{\VoNI}^{+}_{\varepsilon}.
\label{eq:myopic_voni_reduction}
\end{equation}
A positive value of $\mathcal I_\varepsilon^H$ indicates that the expected
certificate improvement from disambiguating $\theta$ exceeds the probing cost.
The algorithm developed below uses this principle in simplified form: it
triggers probing or conservative replanning only when the posterior over
$\theta$ induces a sufficiently large expected FRR certificate gap.

This interpretation connects VoNI directly to the analytical bound in
Theorem~\ref{thm:voni_bound}. The theorem provides a computable upper bound on
the exploit-regret term in \eqref{eq:myopic_voni_reduction}, while
Corollary~\ref{cor:threshold} explains why the same term is small in
sub-crossover credible regimes. Consequently, the high-level POMDP need not be
solved exactly: VoNI supplies a certificate-aware surrogate for the part of
the high-level decision problem that matters for FRR reliability.

\section{Bi-Level Decision Maker}
\label{sec:bilevel}

\subsection{Outer Level: Physical Noise-Floor Posterior Interface}

In the general case, the posterior $q_t(\theta)$ may be updated from both
observation innovations and execution residuals. Observation innovations
provide information about $\sigma_y$, while discrepancies between commanded
and realized motion provide information about $\sigma_u$. For example, if an
estimate of the realized motion or executed action is available, residuals of
the form
\[
  e_t = \hat x_{t+1}-f(\hat x_t,u_t,0)
\]
can be used to update the execution-noise component. The precise likelihood
and posterior representation are platform-dependent; the role of
$q_t(\theta)$ is to quantify uncertainty in the physical parameter that
determines $K_\theta$, $Q_\theta^*$, and $\LV(\theta)$.

A conjugate inverse-Gamma update is one possible posterior interface for a
stationary Gaussian sensing-noise model. The numerical study in
Section~\ref{sec:experiments}, however, uses the sensor-only specialization
$\theta_t=\sigma_{y,t}$ with an innovation-based adaptive estimate and a
truncated-Gaussian approximation to its uncertainty. At each step, the
innovation is
\begin{equation}
  \nu_t = y_t-h(\hat x_t^-),
  \label{eq:innov}
\end{equation}
and a corrected innovation statistic accounts approximately for the
state-belief contribution to the innovation covariance. Posterior samples
from the resulting approximation are then propagated through the calibrated
certificate map to evaluate the VoNI bound. The general analysis below is
conditional on a posterior interface satisfying
Assumption~\ref{ass:posterior_interface}; it does not require the particular
finite-horizon estimator used in the numerical example.

\subsection{Inner Level: FRR Policy Execution}

Given the current posterior estimate $\thetaest_t$, the inner level executes
the cell-constant policy $\bar{\pi}_{\thetaest_t,\varepsilon}$ from the FRR
cover calibrated at $\thetaest_t$. The FRR cover is recomputed whenever
\[
  |\LV(\thetaest_t)-\LV(\thetaest_{t-1})|>\eta
\]
for a prescribed cover-update threshold $\eta>0$. The two-timescale structure
arises naturally: the physical noise parameter and its posterior typically
evolve more slowly than the task-state belief $b_t$, which is updated at every
step.

\subsection{VoNI-Guided High-Level Action Rule}
\label{subsec:obj}

For each diagnostic action
$a^H\in\mathcal A_{\rm probe}^H$, the high-level decision maker computes or
approximates the myopic certificate improvement
$\mathcal I_\varepsilon^H(b_t,q_t,a^H)$ defined in
\eqref{eq:myopic_voni_reduction}. The selected high-level action is
\begin{align}
a_{t,\mathrm{probe}}^H
&\in
\arg\max_{a^H\in\mathcal A_{\mathrm{probe}}^H}
\mathcal I_\varepsilon^H(b_t,q_t,a^H),
\label{eq:best_probe_action}\\
I_t^\star
&:=
\max_{a^H\in\mathcal A_{\mathrm{probe}}^H}
\mathcal I_\varepsilon^H(b_t,q_t,a^H),
\label{eq:best_probe_value}
\end{align}
and select
\begin{equation}
a_t^H
=
\begin{cases}
a_{t,\mathrm{probe}}^H, & I_t^\star>0,\\
\mathrm{exploit}, & I_t^\star\le 0.
\end{cases}
\label{eq:high_level_action_rule}
\end{equation}
Thus a diagnostic action is chosen only when its expected reduction in the
excess FRR certificate gap exceeds its immediate task-level cost. This rule
does not reward a large VoNI value; rather, it values an action that is
expected to reduce VoNI.

Algorithm~\ref{alg:voni} summarizes the resulting bi-level architecture.
When the action-conditioned expectation in
\eqref{eq:myopic_voni_reduction} is expensive, it may be replaced by a
one-step approximation based on posterior samples and the bound in
Theorem~\ref{thm:voni_bound}. The numerical example uses such a smooth
score-based approximation: the current posterior VoNI proxy is mapped to a
stochastic probing probability through a sigmoid rule.

\begin{algorithm}[t]
\caption{VoNI-Guided Bi-Level Decision Maker}
\label{alg:voni}
\begin{algorithmic}[1]
\REQUIRE Prior $q_0(\theta)$, cover-update threshold $\eta$,
         tolerance $\varepsilon$, diagnostic action set
         $\mathcal A_{\rm probe}^H$
\STATE Initialize $\thetaest_0\leftarrow \E_{q_0}[\theta]$
\STATE Compute FRR cover $\mathcal F_{\thetaest_0,\varepsilon}$ using
       \eqref{eq:frr_radius_theta}
\FOR{$t=0,1,2,\ldots$}
  \STATE Compute $\thetaest_t$ and the current posterior $q_t$
  \STATE Compute or upper-bound
         $\VoNI_\varepsilon(b_t,\thetaest_t;q_t)$ using
         \eqref{eq:voni_bound}
  \FOR{each $a^H\in\mathcal A_{\rm probe}^H$}
    \STATE Compute or approximate
           $\mathcal I_\varepsilon^H(b_t,q_t,a^H)$ using
           \eqref{eq:myopic_voni_reduction}
  \ENDFOR
  \IF{$\max_{a^H}\mathcal I_\varepsilon^H(b_t,q_t,a^H)>0$}
    \STATE Select the maximizing diagnostic action
  \ELSE
    \STATE Set $a_t^H\leftarrow\mathrm{exploit}$ and
           $u_t\leftarrow\bar\pi_{\thetaest_t,\varepsilon}(b_t)$
  \ENDIF
  \STATE Execute the selected action and observe the next diagnostic data
  \STATE Update the task belief $b_{t+1}$ and posterior $q_{t+1}(\theta)$
  \IF{$|\LV(\thetaest_t)-\LV(\thetaest_{t-1})|>\eta$}
    \STATE Recompute $\mathcal F_{\thetaest_t,\varepsilon}$
  \ENDIF
\ENDFOR
\end{algorithmic}
\end{algorithm}

\section{Convergence Analysis}
\label{sec:convergence}

\begin{theorem}[Stationary physical-noise convergence]
\label{thm:convergence}
Under Assumption~\ref{ass:slow}, suppose that $\theta_t=\thetatrue$ is constant
after some finite time, that the VoNI-guided policy visits persistently
exciting segments infinitely often, and that the FRR certificates
\eqref{eq:LQ_theta} hold uniformly in a neighborhood of $\thetatrue$.
Assume also that the posterior model is identifiable, so that distinct values
of $\theta$ induce distinct innovation/execution-residual laws on persistently
exciting segments. Then: \textup{(i)} $\thetaest_t\to\thetatrue$ almost surely;
and \textup{(ii)} for fixed FRR tolerance $\varepsilon$,
\begin{equation}
\limsup_{t\to\infty}
\left\|
\Vstar_{\thetatrue}-V^{\bar{\pi}_{\thetaest_t,\varepsilon}}_{\thetatrue}
\right\|_\infty
\le
\frac{2\varepsilon}{1-\gamma}
\quad \mathrm{a.s.}
\label{eq:limsup_policy_bound}
\end{equation}
\end{theorem}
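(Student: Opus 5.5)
The proof splits into two parts. Part (i) is a posterior-consistency argument restricted to the persistently exciting segments. Part (ii) transfers that consistency to the policy loss through the mismatch bound \eqref{eq:mismatch_policy_bound} in the proof of Theorem~\ref{thm:voni_bound}, applied with the degenerate posterior $q=\delta_{\thetatrue}$.

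\textbf{Part (i).} Let $T_0$ be the finite time after which $\theta_t=\thetatrue$. After $T_0$ the diagnostic data $z_\tau$ are generated by the well-specified likelihood $p(z_\tau\mid\mathcal I_\tau,\thetatrue)$ of Assumption~\ref{ass:posterior_interface}.

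1. Write the log posterior ratio for any $\theta$ outside a neighborhood $U$ of $\thetatrue$ as
\[
\log\frac{q_t(\theta)}{q_t(\thetatrue)} = \log\frac{q_{T_0}(\theta)}{q_{T_0}(\thetatrue)} + \sum_{\tau=T_0}^{t}\log\frac{p(z_\tau\mid\mathcal I_\tau,\theta)}{p(z_\tau\mid\mathcal I_\tau,\thetatrue)} .
\]
2. Decompose the sum into its compensator $-\sum_\tau \mathrm{KL}_\tau(\thetatrue\|\theta)$ plus a martingale.
3. On each persistently exciting segment, the excitation bound \eqref{eq:diagnostic_excitation} gives local curvature of order $\delta_F T$ in every direction. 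Identifiability gives strictly positive KL away from $U$. Combined with compactness of $\Theta$ and continuity, this yields $\inf_{\theta\notin U}\sum_{\text{segment}}\mathrm{KL}_\tau\ge c_U>0$ per segment.
4. The policy visits such segments infinitely often, so the compensator diverges to $-\infty$.
5. A martingale strong law (e.g.\ Freedman or a Chow-type SLLN with bounded conditional variances on the compact set) makes the martingale term $o$ of the compensator almost surely.
6. Hence $q_t(\Theta\setminus U)\to0$ a.s. for each $U$. Taking a countable family of shrinking neighborhoods, $q_t\Rightarrow\delta_{\thetatrue}$ a.s., and $\thetaest_t=\E_{q_t}[\theta]\to\thetatrue$ by boundedness of $\Theta$.

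\textbf{Main obstacle.} The hard step is making 3–5 uniform over $\theta\notin U$ rather than pointwise. The segments are selected adaptively by the VoNI rule, so the data are not i.i.d. The first attempt is a finite $\eta$-net of $\Theta\setminus U$ plus a Lipschitz-in-$\theta$ bound on log-likelihoods, which is available because the Gaussian innovation family with $\sigma_y$ bounded away from zero is smooth. A Schwartz-type test argument is the fallback.

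\textbf{Part (ii).} Fix $\theta=\thetatrue$, $\thetaest=\thetaest_t$ in \eqref{eq:mismatch_policy_bound}, which holds pointwise in $b$ with a right side independent of $b$. This gives
\[
\|\Vstar_{\thetatrue}-V^{\bar\pi_{\thetaest_t,\varepsilon}}_{\thetatrue}\|_\infty\le\frac{2}{1-\gamma}\bigl[\Delta_Q(\thetatrue,\thetaest_t)+\varepsilon\rho(\thetatrue,\thetaest_t)\bigr].
\]
The FRR certificates hold uniformly near $\thetatrue$ by hypothesis, and Assumption~\ref{ass:uniform_frr} gives continuity of $Q^*_\theta$ (uniform over $\cB_{\rm reach}\times\cU$) and of $\LV(\theta)$. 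Therefore $\Delta_Q(\thetatrue,\thetaest_t)\to0$ and $\rho(\thetatrue,\thetaest_t)\to1$ on the almost-sure event of part (i). Taking $\limsup$ yields \eqref{eq:limsup_policy_bound}.

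One subtlety needs checking. The cover is recomputed only when $|\LV(\thetaest_t)-\LV(\thetaest_{t-1})|>\eta$, so the deployed cover may be calibrated at a stale estimate $\thetaest_s$ rather than $\thetaest_t$. I would handle this by applying the same bound at the cover's calibration point and noting that it too converges to $\thetatrue$. Alternatively, if continuity only gives $\rho\to1$ up to $\eta$-slack, I would interpret the statement with $\bar\pi_{\thetaest_t,\varepsilon}$ literally, as written, meaning the cover recomputed at $\thetaest_t$.
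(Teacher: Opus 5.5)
Your proposal follows the paper's proof. Part (i) is the same argument: negative KL drift on persistently exciting segments drives the likelihood ratios of incorrect $\theta$ to zero, giving posterior concentration. You spell this out more carefully than the paper's ``standard Bayesian consistency'' sketch. As in the paper, though, you need a per-segment bound $c_U$ that is uniform over the adaptively chosen segments, and this is an implicit strengthening of the qualitative identifiability hypothesis. Part (ii) is exactly the paper's step of sending $\Delta_Q(\thetatrue,\thetaest_t)\to0$ and $\rho(\thetatrue,\thetaest_t)\to1$ in \eqref{eq:mismatch_policy_bound}, so only the $2\varepsilon/(1-\gamma)$ floor remains.

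Drop your first fix for the stale cover. Under the consecutive-difference trigger $|\LV(\thetaest_t)-\LV(\thetaest_{t-1})|>\eta$, convergence of $\thetaest_t$ means the cover is recomputed only finitely often. Its calibration point therefore freezes at a random finite-time estimate that need not approach $\thetatrue$. Only your literal reading, with the cover calibrated at $\thetaest_t$ as the statement and the paper intend, yields \eqref{eq:limsup_policy_bound}.
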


\begin{proof}[Proof sketch]
\textbf{Part (i).}
The posterior update accumulates sufficient statistics from observation
innovations and any available execution residuals. Under persistent excitation
and identifiability, the expected log-likelihood ratio between the true
parameter $\thetatrue$ and any $\theta\neq\thetatrue$ has strictly negative
drift. Equivalently, the KL divergence between the residual law induced by
$\thetatrue$ and that induced by $\theta$ is positive on informative segments.
The likelihood ratio for every incorrect $\theta$ therefore vanishes almost
surely, and standard Bayesian consistency arguments imply posterior
concentration at $\thetatrue$. This conclusion is conditional on the
well-specified and identifiable posterior interface in
Assumption~\ref{ass:posterior_interface}; the finite-horizon estimator used in
the numerical example is evaluated empirically rather than used to establish
this asymptotic claim.

\textbf{Part (ii).}
From Part~(i), $\thetaest_t\to\thetatrue$ almost surely. Continuity of the
certified sensitivity map $\theta\mapsto\LV(\theta)$ implies
$\LV(\thetaest_t)\to\LV(\thetatrue)$ almost surely. Hence the FRR covers
calibrated at $\thetaest_t$ converge, in certified radius, to the cover
calibrated at $\thetatrue$. The mismatch terms in
Theorem~\ref{thm:voni_bound} vanish in the limit:
$\Delta_Q(\thetatrue,\thetaest_t)\to0$ and
$\rho(\thetatrue,\thetaest_t)\to1$. Therefore the excess certificate gap
vanishes, and only the ordinary FRR approximation floor remains. Applying the
FRR policy-sufficiency theorem gives~\eqref{eq:limsup_policy_bound}.

The two-timescale separation is justified because the posterior sufficient
statistics grow over time, so the outer-level posterior changes slowly after
sufficient data accumulate, while the inner-level belief $b_t$ updates at every
step. This is the standard quasi-stationary structure used in two-timescale
stochastic approximation \cite{borkar2008}.
\end{proof}

\begin{remark}[Drifting noise floors]
For gradual drift, exact convergence to a fixed $\thetatrue$ is not the
correct claim. Under bounded drift and persistent excitation, the posterior
tracks the moving physical noise floor with a nonzero steady-state error
determined by drift rate, residual variance, and the effective information
accumulated by the policy. The numerical drift experiment in
Section~\ref{sec:experiments} should therefore be interpreted as a tracking
result rather than an almost-sure convergence result.
\end{remark}

\begin{remark}[Exploration necessity]
The persistent-excitation condition in Theorem~\ref{thm:convergence} may be
supplied by ordinary task motion, by diagnostic probes, or by both. VoNI
promotes diagnostic excitation when posterior mass reaches a regime in which
the FRR certificate is at risk. As the posterior concentrates and the excess
certificate gap decreases, explicit probing can become rare. If the exploit
policy alone is not persistently exciting, however, a nonvanishing scheduled
or event-triggered probing mechanism is still required for the consistency
claim; the theorem does not justify complete cessation of informative actions
in that case.
\end{remark}

\section{Numerical Example}
\label{sec:experiments}

\subsection{Experimental Setup: UGV Dynamics, EKF, and Noise Estimation}

We evaluate VoNI in a closed-loop unicycle-model unmanned ground vehicle
(UGV) simulation with noisy position observations and an extended Kalman
filter (EKF). The purpose is not to reproduce a particular perception stack,
stereo geometry, or obstacle field. Instead, the experiment generates
innovation statistics from a closed-loop vehicle, estimator, and controller,
allowing certificate-aware noise-floor adaptation to be studied in a
dynamical setting.

The vehicle state is
\[
  x_t =
  \begin{bmatrix}
  p_{x,t} & p_{y,t} & \psi_t
  \end{bmatrix}^{\top},
\]
where $(p_{x,t},p_{y,t})$ is the planar position and $\psi_t$ is the
heading. The commanded control is
\[
  u_t =
  \begin{bmatrix}
  v_t & \omega_t
  \end{bmatrix}^{\top},
\]
where $v_t$ is forward speed and $\omega_t$ is yaw rate. The physically
executed control is
\begin{equation}
  \tilde u_t = u_t+\eta_t,
  \qquad
  \eta_t\sim\mathcal N(0,\Sigma_u),
  \label{eq:ugv_exec_noise}
\end{equation}
and the vehicle evolves according to
\begin{align}
p_{x,t+1}
&=
p_{x,t}+\Delta t\,\tilde v_t\cos\psi_t, \\
p_{y,t+1}
&=
p_{y,t}+\Delta t\,\tilde v_t\sin\psi_t, \\
\psi_{t+1}
&=
\psi_t+\Delta t\,\tilde\omega_t .
\end{align}
The observation model is a noisy position measurement
\begin{equation}
  y_t =
  \begin{bmatrix}
  p_{x,t}\\p_{y,t}
  \end{bmatrix}
  + v_t^{y},
  \qquad
  v_t^{y}\sim\mathcal N(0,\sigma_{y,t}^2I_2).
  \label{eq:ugv_obs}
\end{equation}

The EKF propagates the vehicle belief using the commanded control and
performs its measurement update using the current sensing-noise estimate
$\hat\sigma_{y,t}$. The innovation is
\[
  \nu_t = y_t-h(\hat x_t^-).
\]
To reduce contamination from state-estimation uncertainty, the scalar
innovation statistic subtracts a calibrated fraction of the EKF-predicted
position variance before updating the sensing-noise estimate. The numerical
implementation represents uncertainty in $\sigma_y$ by a truncated Gaussian
approximation
\[
  q_t(\sigma_y)
  \approx
  \mathcal N_{[0,\sigma_{\max}]}
  \bigl(\hat\sigma_{y,t},s_t^2\bigr),
\]
where $s_t$ is an adaptive posterior-spread proxy driven by innovation
surprise. Thus, the observations used for noise-floor adaptation are generated
by a closed-loop vehicle and EKF rather than by an externally prescribed
residual sequence.

The low-level task controller drives the vehicle toward
\[
  p_{\mathrm{goal}}=(10,10).
\]
When the high-level policy selects a diagnostic probe, the nominal
goal-directed command is modified by reducing its forward speed and adding a
periodic yaw excitation. The maneuver perturbs the task trajectory while
producing more informative position innovations. The posterior-entropy
baseline instead selects probing according to uncertainty in $\sigma_y$
without accounting for whether that uncertainty threatens the FRR
certificate.

The FRR action-value sensitivity is represented by a smooth calibrated proxy
$L_V(\sigma_y)$ satisfying approximately
\[
  L_V(0.05\,\mathrm{m})=1.03,
  \qquad
  L_V(0.80\,\mathrm{m})=7.97,
\]
with crossover scale
\[
  \sigma_0=0.3\,\mathrm{m}.
\]
For a candidate sensing-noise level $\sigma_y$ and current estimate
$\hat\sigma_y$, the numerical excess-certificate proxy is
\begin{equation}
\begin{aligned}
&\widehat G_\epsilon(\sigma_y,\hat\sigma_y) \\
&=
\frac{2}{1-\gamma}
\left[
 c_Q|\sigma_y-\hat\sigma_y|
 \bigl(1+L_V(\sigma_y)\bigr)
 +
 \epsilon
 \left(
 \frac{L_V(\sigma_y)}{L_V(\hat\sigma_y)}-1
 \right)
\right]_+ ,
\end{aligned}
\label{eq:numerical_certificate_proxy}
\end{equation}
where $\gamma=0.95$, $\epsilon=0.5$, and $c_Q=0.30$. This expression mirrors
the action-value mismatch and FRR-radius inflation terms in
Theorem~\ref{thm:voni_bound}. It is a calibrated certificate proxy rather
than the result of exact value iteration on the continuous-state vehicle
POMDP.

The VoNI score is approximated by posterior sampling:
\[
  \widehat{\operatorname{VoNI}}_{\epsilon,t}
  =
  \frac{1}{N_q}
  \sum_{j=1}^{N_q}
  \widehat G_\epsilon
  \bigl(\sigma_y^{(j)},\hat\sigma_{y,t}\bigr),
  \qquad
  \sigma_y^{(j)}\sim q_t,
\]
with $N_q=192$ samples. The entropy baseline uses the Gaussian
differential-entropy proxy
\[
  H_t
  =
  \frac{1}{2}\log\bigl(2\pi e\,s_t^2\bigr).
\]
Both methods convert their respective scores into stochastic probing
probabilities through smooth sigmoid rules.

We compare two complete high-level probing policies:
\begin{enumerate}
\item \textbf{VoNI-guided probing:} probing probability increases when
posterior uncertainty in $\sigma_y$ induces a large expected excess FRR
certificate gap.
\item \textbf{Posterior-entropy probing:} probing probability increases with
the entropy-like uncertainty score, irrespective of whether the uncertainty
changes the FRR certificate.
\end{enumerate}
The entropy policy therefore represents uncertainty-driven probing, whereas
VoNI represents certificate-driven probing. The comparison is between
separately calibrated complete probing policies: each includes its own smooth
trigger, probing intensity, measurement replication, and adaptation gains.
Accordingly, the results assess policy-level detection, tracking, and probing
performance rather than an ablation that changes only the trigger score.

All statistics are computed over $N=50$ Monte Carlo trials using a fixed
master seed. Reported uncertainties are standard deviations across trials.
Detection is declared when the estimated sensing noise first reaches the
crossover level $\sigma_0$ after the true step change. Drift-tracking error is
computed from step $100$ through the end of the $400$-step experiment. The
simulation code and generated data are available in the accompanying GitHub
repository~\cite{voni_code2026}.

\begin{table}[t]
\centering
\caption{UGV/EKF simulation results across $N=50$ Monte Carlo trials.}
\label{tab:ugv_summary}
\renewcommand{\arraystretch}{1.1}
\begin{tabular}{@{}lcc@{}}
\toprule
& \textbf{VoNI} & \textbf{Entropy} \\
\midrule
Step: detection time $t_{\rm det}$
& $126.48\pm2.58$ & $185.08\pm7.32$ \\
Step: post-jump delay
& $26.48\pm2.58$ & $85.08\pm7.32$ \\
Step: probing fraction
& $13.93\pm1.16\%$ & $95.21\pm1.09\%$ \\
Drift: tracking MAE (m)
& $0.1407\pm0.0150$ & $0.2394\pm0.0172$ \\
Drift: probing fraction
& $39.96\pm1.87\%$ & $96.77\pm0.75\%$ \\
\bottomrule
\end{tabular}
\end{table}

\subsection{Certificate-Gap Landscape}

Figure~\ref{fig:ugv_certificate_landscape} shows the
certificate-gap proxy as a function of the sensing-noise estimate
$\hat\sigma_y$ for four candidate true sensing-noise levels. The gap is small
when the candidate and estimated noise floors are both in the nominal
sub-crossover regime. In contrast, when the candidate noise floor is degraded
but the estimate remains nominal, the gap is large because the FRR cover is
calibrated to an under-resolved operating condition.

The figure visualizes the principal mechanism behind VoNI: uncertainty in the
physical noise floor is decision-relevant only when it changes the calibrated
action-value sensitivity or invalidates the current reliability-cover radius.

\begin{figure}[t]
\centering
\includegraphics[width=\columnwidth]
{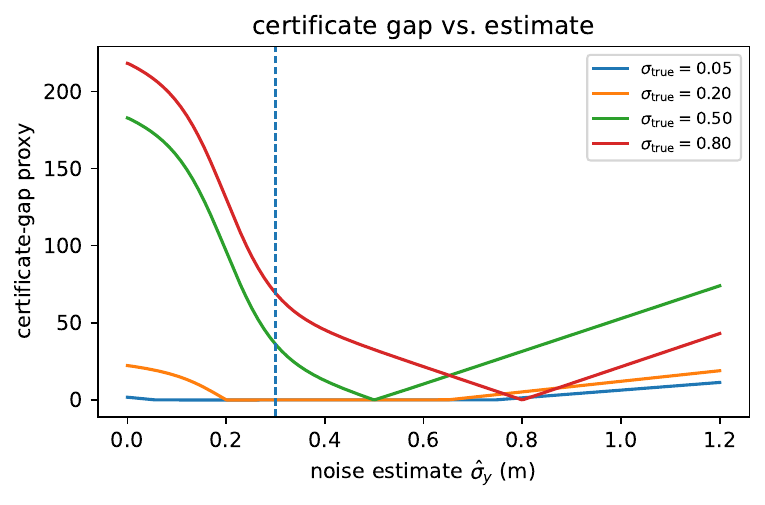}
\caption{
Certificate-gap proxy as a function of the sensing-noise estimate
$\hat\sigma_y$ for several candidate true sensing-noise floors. Posterior
uncertainty is valuable when it places substantial probability on regimes that
can invalidate the current FRR certificate.
}
\label{fig:ugv_certificate_landscape}
\end{figure}

\subsection{Scenario 1: Step Sensing-Noise Jump}

The step-jump scenario runs for $300$ steps. For steps $0$--$99$, the sensing
noise floor is nominal:
\[
  \sigma_{y,\mathrm{true}}=0.05\,\mathrm{m}.
\]
At step $100$, the true noise floor changes abruptly to
\[
  \sigma_{y,\mathrm{true}}=0.80\,\mathrm{m}.
\]
No external regime indicator is supplied; the change must be inferred from
the EKF innovation sequence.

Figure~\ref{fig:ugv_step_results}(a) shows the estimated sensing-noise floor.
VoNI detects the jump at
\[
  t_{\rm det}=126.48\pm2.58,
\]
whereas the posterior-entropy policy detects it at
\[
  t_{\rm det}=185.08\pm7.32.
\]
Since the jump occurs at step $100$, the corresponding post-jump delays are
\[
  26.48\pm2.58
  \qquad\text{and}\qquad
  85.08\pm7.32
\]
steps, respectively. VoNI therefore reduces the mean post-jump detection
delay by approximately $69\%$.

Despite detecting the change earlier, VoNI probes during only
$13.93\pm1.16\%$ of the experiment, whereas the entropy policy probes during
$95.21\pm1.09\%$ of the steps. Thus, the VoNI policy reduces probing frequency
by approximately $85\%$ relative to the entropy policy.

Figure~\ref{fig:ugv_step_results}(b) shows the corresponding FRR
certificate-gap proxy. Before the jump, both policies have near-zero gap
because the nominal sensing model is appropriate. After the jump, the
VoNI score increases when the degraded regime threatens the reliability
certificate. The resulting diagnostic probing accelerates adaptation and
reduces the certificate gap earlier. The entropy policy probes much more
frequently but does not focus its probing specifically on
certificate-relevant mismatch.

\begin{figure}[t]
\centering
\includegraphics[width=\columnwidth]
{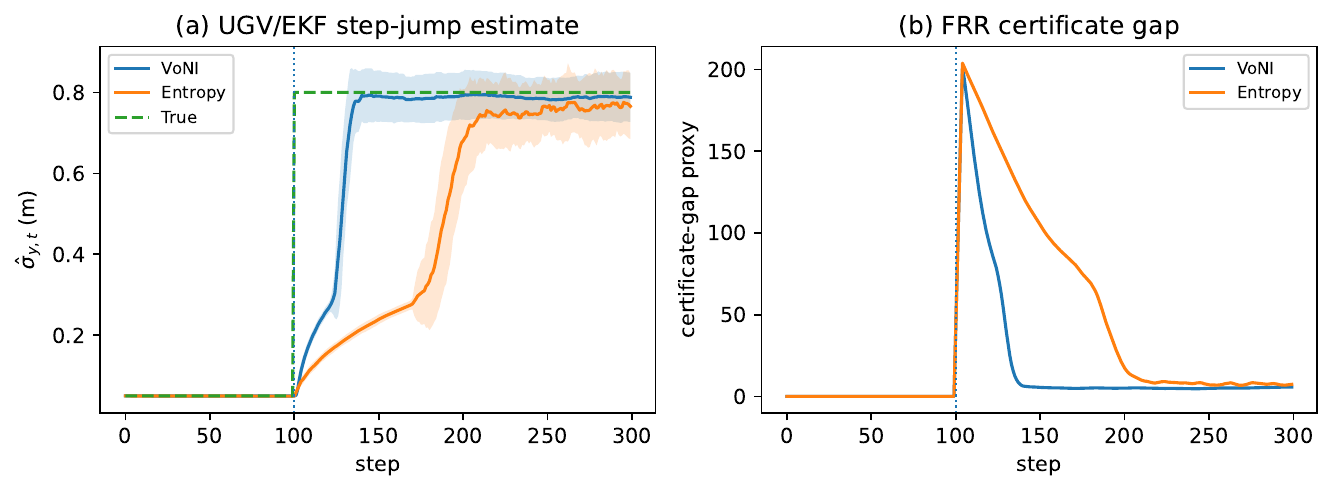}
\caption{
UGV/EKF step-jump experiment.
\textbf{(a)} Estimated sensing-noise floor $\hat\sigma_{y,t}$.
The true noise changes from $0.05\,\mathrm{m}$ to $0.80\,\mathrm{m}$ at
step $100$.
\textbf{(b)} FRR certificate-gap proxy. VoNI detects the degraded sensing
regime earlier and reduces the mismatch-induced certificate gap faster than
posterior-entropy probing.
}
\label{fig:ugv_step_results}
\end{figure}

\subsection{Scenario 2: Gradual Sensing-Noise Drift}

The gradual-drift scenario runs for $400$ steps. The true sensing-noise floor
is nominal for the first $100$ steps, increases linearly from
$0.05\,\mathrm{m}$ to $0.80\,\mathrm{m}$ between steps $100$ and $350$, and
then remains at the degraded level.

Figure~\ref{fig:ugv_drift_results}(a) shows that both methods respond to the
increasing sensing-noise floor, but VoNI achieves lower tracking error:
\[
\begin{aligned}
  \mathrm{MAE}_{\mathrm{VoNI}}
  &=
  0.1407\pm0.0150\,\mathrm{m},\\
  \mathrm{MAE}_{\mathrm{Entropy}}
  &=
  0.2394\pm0.0172\,\mathrm{m}.
\end{aligned}
\]
Thus, VoNI reduces mean drift-tracking error by approximately $41\%$.

Figure~\ref{fig:ugv_drift_results}(b) shows the rolling diagnostic-probe
rate. VoNI probes during
\[
  39.96\pm1.87\%
\]
of the drift-scenario steps, whereas the posterior-entropy policy probes
during
\[
  96.77\pm0.75\%.
\]
VoNI therefore uses approximately $59\%$ fewer probing actions while also
achieving lower tracking error.

\begin{figure}[t]
\centering
\includegraphics[width=\columnwidth]
{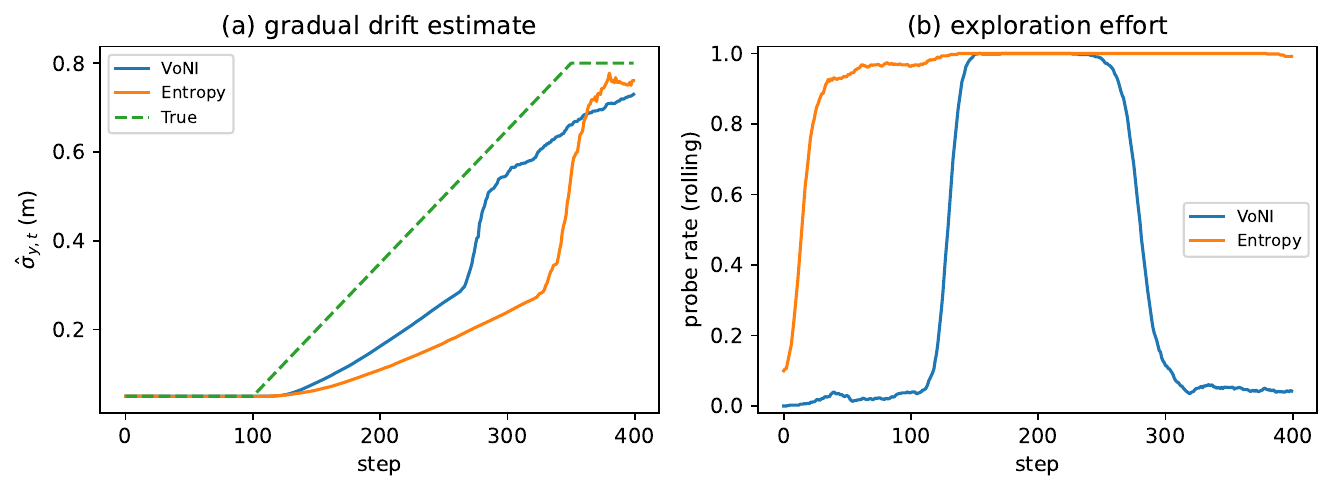}
\caption{
UGV/EKF gradual-drift experiment.
\textbf{(a)} Estimated sensing-noise floor during gradual degradation.
\textbf{(b)} Rolling diagnostic-probe rate. VoNI achieves lower tracking
error while using substantially fewer probing actions.
}
\label{fig:ugv_drift_results}
\end{figure}

\subsection{Closed-Loop Trajectory and Diagnostic Probing}

Figure~\ref{fig:ugv_trajectory} shows representative closed-loop UGV
trajectories for the VoNI and posterior-entropy policies. Both drive the
vehicle from the initial condition toward
$p_{\mathrm{goal}}=(10,10)$, but their diagnostic behaviors differ. VoNI
inserts probes selectively when posterior uncertainty threatens the FRR
certificate. The entropy policy probes much more frequently because it reacts
to uncertainty without considering its effect on the reliability cover.

The trajectories illustrate the operational interpretation of VoNI: probing
is not intrinsically valuable. It is valuable only when the expected
certificate improvement justifies temporarily perturbing the task trajectory.

\begin{figure}[t]
\centering
\includegraphics[width=\columnwidth]
{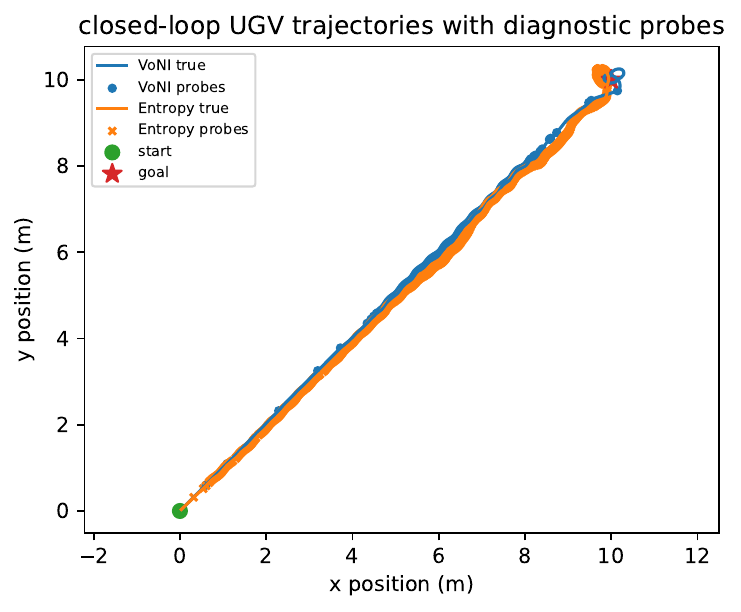}
\caption{
Representative closed-loop UGV trajectories with diagonal goal
$p_{\mathrm{goal}}=(10,10)$ and diagnostic probes. VoNI uses probing actions
selectively, whereas the posterior-entropy policy probes more persistently.
The plot visualizes closed-loop diagnostic behavior rather than a specific
competition-map geometry.
}
\label{fig:ugv_trajectory}
\end{figure}

\subsection{Discussion}

The UGV/EKF experiments support three claims. First, VoNI detects an abrupt
sensing-noise jump substantially earlier than posterior-entropy probing, reducing
post-jump detection delay by approximately $69\%$. Second, in the gradual
drift scenario, VoNI achieves lower tracking error while using approximately
$59\%$ fewer probing actions than the posterior-entropy baseline. Third, the trajectory-level
simulation shows that these gains arise from selective diagnostic probing in a
closed-loop vehicle/filter system, not from an externally supplied residual
sequence.

These results should still be interpreted as certificate-level validation
rather than full-stack autonomous-vehicle validation. The experiment uses a
simple unicycle model, position observations, and an EKF, and the FRR
suboptimality is represented by a calibrated certificate-gap proxy rather than
by exact POMDP value iteration. Nevertheless, the simulation captures the
mechanism needed for the present paper: the innovation sequence is generated by
a physical vehicle model, and exploration is triggered only when uncertainty in
the physical noise floor threatens the FRR certificate.

\section{Conclusion}
\label{sec:conclusion}

We introduced the Value of Noise Information (VoNI) as a principled,
FRR-grounded criterion for active physical noise-floor estimation in partially
observed autonomous systems. The latent parameter may be a scalar sensing noise
floor or a combined sensing--execution parameter
$\theta=(\sigma_y,\sigma_u)$ that includes actuator or action-execution
uncertainty. VoNI quantifies the excess FRR certificate gap caused by using a
reliability cover calibrated to an incorrect physical noise parameter, is
bounded analytically by the Lipschitz machinery of \cite{frr2026}, and is
small when the posterior credible region remains in a sub-crossover regime
where the certificate is insensitive to $\theta$.
The bi-level decision maker derived from VoNI achieves almost-sure convergence
of the physical noise estimate in stationary regimes and convergence of the
policy loss to the FRR approximation floor.
The high-level probing rule compares the expected reduction in the excess
FRR certificate gap with the task-level cost of diagnostic action; the
numerical implementation uses a smooth posterior-sampling approximation to
this rule. The same construction admits a high-level finite-POMDP
interpretation: VoNI is a certificate-aware myopic
approximation to the value of disambiguating the latent sensing--execution
regime.
Numerical results instantiate the sensor-only case on a closed-loop UGV/EKF
model and confirm substantially earlier detection of a sensing-noise jump than
posterior-entropy baselines, with improved drift tracking and fewer unnecessary
probing actions.

Future work will extend the framework to finite-sample convergence rates,
which would bound detection delay in terms of problem parameters such as the
crossover scale, $\gamma$, and the Fisher information. A second direction is
joint sensing--execution estimation, where observation innovations estimate
$\sigma_y$ and motion or tracking residuals estimate $\sigma_u$. Multi-modal
posteriors arising from categorical terrain types and integration with the
NLDCA framework for coarse-to-fine planning under context-dependent noise are
natural next steps.

\bibliographystyle{IEEEtran}
\bibliography{references}

@inproceedings{haarnoja2018sac,
  title     = {Soft Actor-Critic: Off-Policy Maximum Entropy Deep Reinforcement Learning with a Stochastic Actor},
  author    = {Haarnoja, Tuomas and Zhou, Aurick and Abbeel, Pieter and Levine, Sergey},
  booktitle = {Proceedings of the 35th International Conference on Machine Learning},
  pages     = {1861--1870},
  year      = {2018},
  volume    = {80},
  series    = {Proceedings of Machine Learning Research},
  publisher = {PMLR}
}

@misc{voni_code2026,
  author       = {Yoon, Hyung-Jin},
  title        = {{VoNI POMDP Noise Estimation}: Simulation Code},
  year         = {2026},
  howpublished = {\url{https://github.com/LCAS-Lab/voni-pomdp-noise-estimation}},
  note         = {GitHub repository}
}

@misc{frr2026,
  title         = {Finite Reliability Representations: Noise-Calibrated Belief-Space Covers for Reliable Decision-Making},
  author        = {Hyung-Jin Yoon and Hunmin Kim},
  year          = {2026},
  eprint        = {2607.04019},
  archivePrefix = {arXiv},
  primaryClass  = {eess.SY},
  url           = {https://arxiv.org/abs/2607.04019},
  doi           = {10.48550/arXiv.2607.04019}
}

@article{kaelbling1998,
  author  = {Kaelbling, Leslie Pack and Littman, Michael L. and Cassandra, Anthony R.},
  title   = {Planning and Acting in Partially Observable Stochastic Domains},
  journal = {Artificial Intelligence},
  volume  = {101},
  number  = {1--2},
  pages   = {99--134},
  year    = {1998}
}

@incollection{spaan2012,
  author    = {Spaan, Matthijs T. J.},
  title     = {Partially Observable Markov Decision Processes},
  booktitle = {Reinforcement Learning: State-of-the-Art},
  editor    = {Wiering, Marco and van Otterlo, Martijn},
  publisher = {Springer},
  address   = {Berlin, Heidelberg},
  pages     = {387--414},
  year      = {2012}
}

@article{hauskrecht2000,
  author  = {Hauskrecht, Milos},
  title   = {Value-Function Approximations for Partially Observable Markov Decision Processes},
  journal = {Journal of Artificial Intelligence Research},
  volume  = {13},
  pages   = {33--94},
  year    = {2000}
}

@inproceedings{pineau2003,
  author    = {Pineau, Joelle and Gordon, Geoff and Thrun, Sebastian},
  title     = {Point-Based Value Iteration: An Anytime Algorithm for {POMDP}s},
  booktitle = {Proceedings of the International Joint Conference on Artificial Intelligence},
  pages     = {1025--1030},
  year      = {2003}
}

@inproceedings{kurniawati2008,
  author    = {Kurniawati, Hanna and Hsu, David and Lee, Wee Sun},
  title     = {{SARSOP}: Efficient Point-Based {POMDP} Planning by Approximating Optimally Reachable Belief Spaces},
  booktitle = {Proceedings of Robotics: Science and Systems},
  year      = {2008}
}

@inproceedings{roy2002,
  author    = {Roy, Nicholas and Gordon, Geoffrey J.},
  title     = {Exponential Family {PCA} for Belief Compression in {POMDP}s},
  booktitle = {Advances in Neural Information Processing Systems},
  volume    = {15},
  pages     = {1667--1674},
  year      = {2002}
}

@article{mehra1970,
  author  = {Mehra, Raman K.},
  title   = {On the Identification of Variances and Adaptive Kalman Filtering},
  journal = {IEEE Transactions on Automatic Control},
  volume  = {15},
  number  = {2},
  pages   = {175--184},
  year    = {1970}
}

@inproceedings{ross2007bayesadaptive,
  author    = {Ross, St{\'{e}}phane and Chaib-draa, Brahim and Pineau, Joelle},
  title     = {Bayes-Adaptive {POMDP}s},
  booktitle = {Advances in Neural Information Processing Systems},
  year      = {2007}
}

@article{doshi2010,
  author  = {Doshi-Velez, Finale and Pfau, David and Wood, Frank and Roy, Nicholas},
  title   = {Bayesian Nonparametric Methods for Partially-Observable Reinforcement Learning},
  journal = {IEEE Transactions on Pattern Analysis and Machine Intelligence},
  volume  = {37},
  number  = {2},
  pages   = {394--407},
  year    = {2015}
}

@article{lindley1956,
  author  = {Lindley, Dennis V.},
  title   = {On a Measure of the Information Provided by an Experiment},
  journal = {The Annals of Mathematical Statistics},
  volume  = {27},
  number  = {4},
  pages   = {986--1005},
  year    = {1956}
}

@article{mackay1992,
  author  = {MacKay, David J. C.},
  title   = {Information-Based Objective Functions for Active Data Selection},
  journal = {Neural Computation},
  volume  = {4},
  number  = {4},
  pages   = {590--604},
  year    = {1992}
}

@article{krause2008,
  author  = {Krause, Andreas and Singh, Ajit and Guestrin, Carlos},
  title   = {Near-Optimal Sensor Placements in {Gaussian} Processes: Theory, Efficient Algorithms and Empirical Studies},
  journal = {Journal of Machine Learning Research},
  volume  = {9},
  pages   = {235--284},
  year    = {2008}
}

@inproceedings{bourgault2002,
  author    = {Bourgault, Fr{\'{e}}d{\'{e}}ric and Makarenko, Alexei A. and Williams, Stefan B. and Grocholsky, Ben and Durrant-Whyte, Hugh F.},
  title     = {Information Based Adaptive Robotic Exploration},
  booktitle = {Proceedings of the IEEE/RSJ International Conference on Intelligent Robots and Systems},
  pages     = {540--545},
  year      = {2002}
}

@inproceedings{charrow2015,
  author    = {Charrow, Benjamin and Liu, Sikang and Kumar, Vijay and Michael, Nathan},
  title     = {Information-Theoretic Mapping Using Cauchy-Schwarz Quadratic Mutual Information},
  booktitle = {Proceedings of the IEEE International Conference on Robotics and Automation},
  pages     = {4791--4798},
  year      = {2015}
}

@article{sutton1999options,
  author  = {Sutton, Richard S. and Precup, Doina and Singh, Satinder},
  title   = {Between {MDP}s and Semi-{MDP}s: A Framework for Temporal Abstraction in Reinforcement Learning},
  journal = {Artificial Intelligence},
  volume  = {112},
  number  = {1--2},
  pages   = {181--211},
  year    = {1999}
}

@book{borkar2008,
  author    = {Borkar, Vivek S.},
  title     = {Stochastic Approximation: A Dynamical Systems Viewpoint},
  publisher = {Cambridge University Press},
  year      = {2008}
}

@misc{adaptive_mppi_noise2026,
  author        = {Hyung-Jin Yoon and Hunmin Kim},
  title         = {Adaptive {MPPI} with Online Disturbance Covariance Estimation:
                   Provable Stability Tightening via Spatial Smoothing},
  year          = {2026},
  eprint        = {2607.08942},
  archivePrefix = {arXiv},
  primaryClass  = {eess.SY},
  url           = {https://arxiv.org/abs/2607.08942},
  doi           = {10.48550/arXiv.2607.08942}
}

\end{document}